\documentclass[16pt,  onesided, charterfonts]{patmorin}
\usepackage{setspace}
\usepackage{latexsym}
\usepackage{amsmath,amsthm, amsfonts,graphicx,calc}
%usepackage{theorem}
\usepackage{graphicx}
\usepackage{color}
\pagestyle{empty}

\definecolor{grey}{gray}{0.4}

%\doublespacing

%\textheight 22cm \textwidth 14.5cm
%\oddsidemargin 1.9cm \evensidemargin 1.9cm
%\setlength{\topmargin}{1.9in}

\renewcommand{\sc}{\textsc}

%\input{vida}

%\usepackage[lmargin=2.54cm, rmargin=2.54cm, tmargin=2.54cm, bmargin=2.54cm, centering]{geometry}

%%%%%%%%%%%%%%%%%%%%%%%%%%%%%%%%%%%%%%%%%%%%%%%%%%%%%%%%%%%%%%%%%%%%%%%%%%
% Section headings 
%%%%%%%%%%%%%%%%%%%%%%%%%%%%%%%%%%%%%%%%%%%%%%%%%%%%%%%%%%%%%%%%%%%%%%%%%%

%\newenvironment{serve}{\begin{tabular}{@{\hspace{\lx}}p{\la}@{\hspace{-3cm}}p{5.5in}}}{\end{tabular}}

%\latexhtml%
{}%
{}

%%%%%%%%%%%%%%%%%%%%%%%%%%%%%%%%%%%%%%%%%%%%%%%%%%%%%%%%%%%%%%%%%%%%%%%%%%
% Font for typeseting things like job titles and research interests
%%%%%%%%%%%%%%%%%%%%%%%%%%%%%%%%%%%%%%%%%%%%%%%%%%%%%%%%%%%%%%%%%%%%%%%%%%

%%%%%%%%%%%%%%%%%%%%%%%%%%%%%%%%%%%%%%%%%%%%%%%%%%%%%%%%%%%%%%%%%%%%%%%%%%
% Font for typeseting author lists
%%%%%%%%%%%%%%%%%%%%%%%%%%%%%%%%%%%%%%%%%%%%%%%%%%%%%%%%%%%%%%%%%%%%%%%%%%

%%%%%%%%%%%%%%%%%%%%%%%%%%%%%%%%%%%%%%%%%%%%%%%%%%%%%%%%%%%%%%%%%%%%%%%%%%
% For keeping short sections on the same page
%%%%%%%%%%%%%%%%%%%%%%%%%%%%%%%%%%%%%%%%%%%%%%%%%%%%%%%%%%%%%%%%%%%%%%%%%%
%\newenvironment{myminipage}[1]{\latexhtml{\begin{minipage}{#1}}{}}{\latexhtml{\end{minipage}}{}}

%%%%%%%%%%%%%%%%%%%%%%%%%%%%%%%%%%%%%%%%%%%%%%%%%%%%%%%%%%%%%%%%%%%%%%%%%%
% The longtable format used for everything but the publications
%%%%%%%%%%%%%%%%%%%%%%%%%%%%%%%%%%%%%%%%%%%%%%%%%%%%%%%%%%%%%%%%%%%%%%%%%%
\newlength{\la}
\setlength{\la}{1.7in}
\newlength{\lx}
\setlength{\lx}{1ex}
\newlength{\lb}
\setlength{\lb}{\textwidth}
\addtolength{\lb}{-\la}
\addtolength{\lb}{-\lx}
\addtolength{\lb}{-\lx}

%\latexhtml
{\newenvironment{toocol}{\begin{longtable}{@{\hspace{\lx}}p{\la}@{\hspace{\lx}}p{\lb}}}{\end{longtable}}}
{}

%\latexhtml
{\newenvironment{edu}{\begin{longtable}{@{\hspace{\lx}}p{\la}@{\hspace{-1cm}}p{5.5in}}}{\end{longtable}}}
{}

%\latexhtml
{\newenvironment{serve}{\begin{longtable}{@{\hspace{\lx}}p{\la}@{\hspace{-3cm}}p{5.5in}}}{\end{longtable}}}
{}

%\latexhtml
{\newenvironment{titlecol}{\begin{longtable}{cc}}{\end{longtable}}}
{}

%%%%%%%%%%%%%%%%%%%%%%%%%%%%%%%%%%%%%%%%%%%%%%%%%%%%%%%%%%%%%%%%%%%%%%%%%%
% A list environment for listing publications
%%%%%%%%%%%%%%%%%%%%%%%%%%%%%%%%%%%%%%%%%%%%%%%%%%%%%%%%%%%%%%%%%%%%%%%%%%
\newcounter{marker}
\newcounter{save}
%\latexhtml

\newenvironment{fortuna}{\begin{list}
        {$\bullet$}
        {\setlength{\itemsep}{-.17cm}
         \setlength{\topsep}{0cm}
         \setlength{\parskip}{0cm}
         \setlength{\labelwidth}{8mm}
	 \setlength{\labelsep}{1mm}
 	 \setlength{\leftmargin}{9mm}
         \usecounter{marker}
         \setcounter{marker}
        {\value{save}}}}
        {\setcounter{save}{\value{marker}}
\end{list}}

%%%%%%%%%%%%%%%%%%%%%%%%%%%%%%%%%%%%%%%%%%%%%%%%%%%%%%%%%%%%%%%%%%%%%%%%%%
% For putting links to publications
%%%%%%%%%%%%%%%%%%%%%%%%%%%%%%%%%%%%%%%%%%%%%%%%%%%%%%%%%%%%%%%%%%%%%%%%%%
% Eg. \publink{postscript}{online/triangs.ps}

%\latexhtml{
%  \newcommand{\publink}[2]{}
%  }{
%  \newcommand{\publink}[2]{[\htmladdnormallink{#1}{#2}]}
%  }

%%%%%%%%%%%%%%%%%%%%%%%%%%%%%%%%%%%%%%%%%%%%%%%%%%%%%%%%%%%%%%%%%%%%%%%%%%
% For putting publications under review
%%%%%%%%%%%%%%%%%%%%%%%%%%%%%%%%%%%%%%%%%%%%%%%%%%%%%%%%%%%%%%%%%%%%%%%%%%
%\latexhtml{
%\newcommand{\submittedpapers}[1]{#1}}
%{
%\newcommand{\submittedpapers}[1]{#1}
%}

\sloppy

\newtheorem{theorem}{Theorem}
\newtheorem{lemma}[theorem]{Lemma}
\newtheorem{corollary}[theorem]{Corollary}

\newtheorem{open}[theorem]{Open Problem}

\newcommand{\figlabel}[1]{\label{fig:#1}}

\newcommand{\figref}[1]{\mbox{Figure~\ref{fig:#1}}}
\newcommand{\thmlabel}[1]{\label{thm:#1}}
\newcommand{\thmref}[1]{Theorem~\ref{thm:#1}}
\newcommand{\lemlabel}[1]{\label{lem:#1}}
\newcommand{\lemref}[1]{Lemma~\ref{lem:#1}}

\newcommand{\Figure}[4][htb]{
\begin{figure}[#1]
  \vspace*{1ex}
  \begin{center}#3\end{center}
        \vspace*{-2ex}
        \caption{\figlabel{#2}#4}
\end{figure}}

\def\R{{\mathbb{R}}}
\def\N{{\mathbb{N}}}

\def\G{{\mathcal{G}}}
\def\HH{{\mathcal{H}}}

\newcommand{\etal}{\emph{et al.}}
\newcommand{\ch}[1]{\ensuremath{\protect\textup{CH}(#1)}}

% small caps

%\newcommand{\x}{\ensuremath{\protect\textup{\textsf{x}}}}
%\newcommand{\y}{\ensuremath{\protect\textup{\textsf{y}}}}
\newcommand{\xx}{\ensuremath{\protect{v_1}}}
\newcommand{\yy}{\ensuremath{\protect{v_2}}}
\newcommand{\zz}{\ensuremath{\protect{v_n}}}

\newcommand{\eg}{\ensuremath{G}}
\newcommand{\fg}{\ensuremath{\mathcal{F}}}

\newcommand{\po}{\ensuremath{<_\mathcal{F}}}

%\newcommand{\yy}{\ensuremath{y}}
%\newcommand{\yy}{\ensuremath{y}}
%\newcommand{\zz}{\ensuremath{z}}

% Simple proof environment  You must \usepackage{latexsym} to get the \Box.
%\newenvironment{proof}{\noindent\textbf{Proof: }\ignorespaces}
%  {\hspace*{\fill}$\Box$\medskip}

% Complex \xxx for making notes of things to do.  Use \xxx{...} for general
% notes, and \xxx[who]{...} if you want to blame someone in particular.
% Puts text in brackets and in bold font, and normally adds a marginpar
% with the text ``xxx'' so that it is easy to find.  On the other hand, if
% the comment is in a minipage, figure, or caption, the xxx goes in the text,
% because marginpars are not possible in these situations.
{\makeatletter
 \gdef\xxxmark{%
   \expandafter\ifx\csname @mpargs\endcsname\relax % in minipage?
     \expandafter\ifx\csname @captype\endcsname\relax % in figure/caption?
       \marginpar{xxx}% not in a caption or minipage, can use marginpar
     \else
       xxx % notice trailing space
     \fi
   \else
     xxx % notice trailing space
   \fi}
 \gdef\xxx{\@ifnextchar[\xxx@lab\xxx@nolab}
 \long\gdef\xxx@lab[#1]#2{{\bf [\xxxmark #2 ---{\sc #1}]}}
 \long\gdef\xxx@nolab#1{{\bf [\xxxmark #1]}}
 % This turns them off:
 \long\gdef\xxx@lab[#1]#2{}\long\gdef\xxx@nolab#1{}%
}

\title{A Center Transversal Theorem for Hyperplanes and
  Applications to Graph Drawing\footnote{The preliminar version of this paper has appeared in the Proceedings of the $27$th annual ACM symposium on Computational geometry, SoCG, 2011.}}

%\numberofauthors{2}

%\author{
%Vida Dujmovi\'c
%School of Computer Science}\\ 
%Carleton University}\\
%\affaddr{Ottawa, Canada}\\
%\email{vida@cg.scs.carleton.ca}
%\alignauthor
%Stefan Langerman
%\titlenote{Ma\^itre de recherches du F.R.S.-FNRS.}\\
%\affaddr{D\'epartement d'Informatique}\\
%\affaddr{Universit\'e Libre de Bruxelles}\\
%\affaddr{Brussels, Belgium}\\
%\email{stefan.langerman@ulb.ac.be}
%}

\author{Vida Dujmovi{\'c}\footnote{School of Computer Science, Carleton University, Ottawa, Canada  (\texttt{vida@scs.carleton.ca}). Partially supported by NSERC.}$\ $ and$\ $ Stefan Langerman\footnote{D\'epartement d'Informatique, Universit\'e Libre de Bruxelles, Brussels, Belgiu (\texttt{stefan.langerman@ulb.ac.be}). Ma\^itre de recherches du F.R.S.-FNRS.}}

\date{}

\begin{document}

\maketitle
\begin{abstract}
Motivated by an open problem from graph drawing, we study several
partitioning problems for line and hyperplane arrangements. We prove a
ham-sandwich cut theorem: given two sets of $n$ lines in $\R^2$, there
is a line $\ell$ such that in both line sets, for both halfplanes delimited
by $\ell$, there are $\sqrt{n}$ lines which pairwise intersect in that
halfplane, and this bound is tight;  
a centerpoint theorem: for any set of $n$ lines there is a point such
that for any halfplane containing that point there are $\sqrt{n/3}$ of
the lines which pairwise intersect in that halfplane. 
We generalize those results in higher dimension and obtain a center
transversal theorem, a same-type lemma, and a positive portion 
Erd\H{o}s-Szekeres theorem for hyperplane arrangements.  
This is done by formulating a generalization of the center transversal
theorem which applies to set functions that are much more general than measures.
Back to Graph Drawing (and in the plane), we completely solve the open
problem that motivated our search: there is no set of $n$ labelled lines that are universal for all $n$-vertex labelled planar graphs. As a contrast, the main result by Pach and Toth in [{\em J. of Graph Theory}, 2004], has, as an easy consequence, that every set of $n$ (unlabelled) lines is universal for all $n$-vertex (unlabelled) planar graphs. 

%there is no universal set of lines that supports all labeled planar graphs, but all set of lines support all graphs on the same number of vertices for some labeling.
\end{abstract}

% A category with the (minimum) three required fields
%\category{H.4}{Information Systems Applications}{Miscellaneous}
%A category including the fourth, optional field follows...
%\category{F.2.2}{Analysis of Algorithms and Problem Complexity}{Nonnumerical Algorithms and Problems}[Geometrical problems and computations]
%\category{G.2.2}{Discrete Mathematics}{Graph Theory}

%\terms{Theory}

%\keywords{graph drawing, hyperplane arrangements, ham-sandwich theorem,  central-transversal theorem, Ramsey numbers}

%\newpage

\section{Introduction}
Consider a mapping of the vertices of a graph to distinct points in
the plane and represent each edge by the closed line segment between
its endpoints. Such a graph representation is a \emph{(straight-line) drawing} if the only vertices that each edge intersects are its own endpoints. A \emph{crossing} in a drawing is a pair of edges that intersect at some point other than a common endpoint. A drawing is \emph{crossing-free} if it has no crossings.

One main focus in graph drawing is finding methods to produce
drawings or crossing-free drawings for a given graph with various
restrictions on the position of the vertices of the graph in the
plane. For instance, there is plethora of work where vertices are required to be placed on integer grid points or on parallel lines in $2$ or $3$--dimensions.

Given a set $R$ of $n$ regions in the plane and an $n$-vertex graph
$G$, consider a class of graph drawing problems where $G$ needs to be
drawn crossing-free by placing each vertex of $G$ in one region of
$R$. If such a drawing exists, then $R$ is said to \emph{support}
$G$. The problems studied in the literature distinguish between two
scenarios: in one, each vertex of the graph is prescribed its specific
region (that is, the vertices and the regions are labelled); in the
other, each vertex is free to be assigned to any of the $n$ regions
(that is, the vertices are unlabelled).

When regions are points in the plane, Rosenstiehl and Tarjan \cite{rt-rplbo-86}
asked if there exists a set of $n$ points that support all $n$-vertex
unlabelled planar graphs. This question is answered in the negative by De
Fraysseix \cite{fpp-sssfe-88, dFPP90}. On the contrary, every set of $n$ points in general
position supports all $n$-vertex unlabelled outerplanar graphs, as proved by
Gritzmann \etal \cite{Gritzmann91} and recapitulated in Lemma $14.7$ in the text by Agarwal and Pach~\cite{jp}. %Bose \cite{Bose02}. 
If the drawings are not restricted to be straight-line, then every set of labelled points supports every labelled planar graph, as shown by \cite{pw-epg-01}. However $\Omega(n)$ bends per edge may be necessary in any such crossing-free drawing.

%The original motivation for this work stems from  a generalization of the track thingies where each point is restricted to lie on a given line:  given $n$ labelled lines and a plane graph $G$ on $n$ labelled vertices, is there a way to place each vertex somewhere on the line with the same label in such a way that the resulting drawing is crossing free? If yes, we say that the set of lines \emph{supports} the label graph $G$. In this context, the most basic question is whether there is a \emph{universal} set of $n$ labelled lines that supports every possible labelled plane graph on $n$ vertices.

When regions are labelled lines in the plane, Estrella-Balderrama~\etal\cite{1531118} showed that for every $n\geq 6$, there is no set of $n$ \emph{parallel} lines in the plane that support all labelled $n$-vertex planar graphs. The authors moreover characterized a (sub)class of $n$-vertex planar graphs that are supported by every set of $n$-parallel lines, for every labelling of the graphs in the class. That class is mainly comprised of several special families of trees.
Dujmovi\'c~\etal\cite{mi-GD10} showed that no set of $n$ lines that
\emph{all} intersect in one common point supports all $n$-vertex labelled planar graphs.  Moreover, they
show that for every $n$ large enough, there is a set of $n$ lines in general
position that does not support all labelled $n$-vertex planar graphs. They
leave as the main open problem the question of whether, for every $n$
large enough, there exists a \emph{universal} set of $n$ lines in the
plane, that is, one that supports all labelled $n$-vertex planar graphs.  In
Section~\ref{sec:graph-drawing}, as our main graph drawing result, we
answer that question in the negative. 
The main result by Pach and Toth~\cite{pt04} on monotone drawings, has, as an easy consequence, that in the unlabelled case, every set of $n$-lines supports every $n$-vertex unlabelled planar graph. As a side note, we give an alternative and direct proof of that fact. The result illustrates the sharp contrast with the labelled case.

While the positive result is proved using little of the
geometry in the arrangement, the non-existence of universal line sets
required extraction of some (bad) substructure from any line arrangement.
This prompted us to study several structural and partitioning problems
for line and hyperplane arrangements.

\paragraph{Hyperplane arrangements}
Partitioning problems are central to our understanding of discrete and
computational geometry, and while many works have focused on
partitioning point sets, probability distributions or measures, much
less is understood for sets of lines in $\R^2$ or hyperplanes in
$\R^d$. This is partially due to the fact that a line (or a hyperplane), being infinite,
can't be contained in any bounded region, or even in a halfplane
(except if the boundary of the halfplane is parallel to the given
line). Previous works (such as cuttings \cite{cf-dvrsi-90,m-cen-90} or equipartitions
\cite{ls02-opt-in-arr}) have focused on identifying, and bounding the number of
lines/hyperplanes intersecting a set of regions. Others \cite{csss-otass-89} on
partitioning the vertices of the arrangements rather than the lines themselves. Those results
have found numerous applications.
Our graph drawing problem motivates a different approach.

An arrangement $L$ of $n$ lines in $\R^2$ is composed of vertices
$V(L)$ (all pairwise intersections between lines of $L$), edges
connecting these vertices, and half-lines. If we omit the half-lines,
we are left with a finite graph which can be contained in a bounded
region of the plane, in particular, it is contained in the convex hull
$CH(V(L))$ of the vertices of the arrangement. 
Therefore, a natural way of evaluating the portion of an arrangement
contained in a given convex region $C$ is to find the largest subset
$L'$ of lines of $L$ such that the arrangement of $L'$ (without the
half-lines) is contained in $C$, or equivalently, such that all
pairwise intersections of lines in $L'$ lie in $C$.

It is not hard to show that, in any arrangement of $n$ lines, a
 line $\ell$ can be found such that for both closed halfplanes
bounded by $\ell$ there are at least $\sqrt{n}$ lines which pairwise intersect in
that halfplane. This provides the analogue of a bisecting line for point sets.
In Section~\ref{sec:lines}, we show that any two line arrangements can be
bisected simultaneously in this manner, thus proving a ham-sandwich
theorem for line arrangements. We also prove a centerpoint theorem:
for any arrangement of $n$ lines, there is a point $q$ such that for any
halfplane containing $q$, there are at least $\sqrt{n/3}$ lines of the
arrangement that pairwise intersect in that halfplane. 
In Section~\ref{sec:hyperplanes} we generalize these notions to higher
dimensions and prove a center transversal theorem: for any $k$ and
$d$, there is a growing function $Q$ such that for any $k$ sets
$A_1,\ldots,A_k$ of hyperplanes in $\R^d$, 
there is a $(k-1)$-flat $\pi$ such that for any
halfspace $h$ containing $\pi$ there is a subset $A'_i$ of $Q(|A_i|)$
hyperplanes from each set $A_i$ such that any $d$ hyperplanes of
$A'_i$ intersect in $h$. The bound $Q$ we find is related to Ramsey
numbers for hypergraphs.

Ham-sandwich theorems have a number of natural consequences. In
Section~\ref{sec:cent-transv} we show a same-type lemma for hyperplane
arrangements: informally, for any $k$ arrangements $A_1,\ldots,A_k$ of 
hyperplanes in general position (no $d+1$ share a point) and that are
large enough, we can find a large subset of hyperplanes $A'_i$ from
each set $A_i$ such that the convex hulls $CH(A'_i)$ of the vertices
in the arrangements $A'_i$ are well-separated, that is, no hyperplane
hits $d+1$ of them. In the plane, we also show a positive portion
Erd\H{o}s-Szekeres theorem: for any integers $k$ and $c$ there is an
integer $N$ such that any set of $N$ lines in general position
contains $k$ subsets $A_1,\ldots,A_k$ of $c$ lines each such that the
vertices of each arrangement $A_i$ can be separated from those of all the
others by a line.

All the results above would be relatively easy to prove if the set
function we were computing -- the maximum subset of hyperplanes that
have all $d$-wise intersections in a given region -- was a measure. 
Unfortunately it is not. However, in Section~\ref{sec:cent-transv}, we
identify basic properties much weaker than those of measures which, 
if satisfied by a set function, guarantee a central-transversal
theorem to be true. 

\section{Center transversal theorem}\label{sec:cent-transv}
The center transversal theorem is a 
generalization of both the ham-sandwich cut theorem, and the
centerpoint theorem discovered independently by
Dol'nikov \cite{Dolnikov-center-transversal}, and {\v Z}ivaljevi{\'c}
and Vre{\'c}ica \cite{zv-ehst-90}. 
The version of Dol'nikov is defined for a class of set functions that
is more general than measures.
Let $\HH$ be the set of all open halfspaces in $\R^d$ and let $\G$ be
a family of subsets of $\R^d$ closed under union operations and that
contains $\HH$. 
A \emph{charge} $\mu$ is a finite set function that is defined for all set
$X\in\G$, and that is \emph{monotone} ($\mu(X) \leq \mu(Y)$ whenever
$X\subseteq Y$) and \emph{subadditive} ($\mu(X\cup Y) \leq \mu(X)+\mu(Y)$).
A charge $\mu$ is \emph{concentrated} on a set $X$ if for every halfspace
$h\in\HH$ s.t. $h\cap X = \emptyset$, $\mu(h) = 0$.
Dol'nikov shows\footnote{Dol'nikov actually shows a slightly more
  general theorem that allows for non-concentrated charges. For the
  sake of simplicity we only discuss the simplified version even
  though our generalizations extend to the stronger original result.}:

\begin{theorem} [Center transversal theorem \cite{Dolnikov-center-transversal}]
For arbitrary $k$ charges $\mu_i$, $i=1,\ldots,k$, defined on $\G$ and
concentrated on bounded sets, there exists a $(k-1)$-flat $\pi$ such that 
$$\mu_i(h) \geq \frac{\mu_i(\R^d)}{d-k+2}, i=1,\ldots,k,$$
for every open halfspace $h\in\HH$ containing $\pi$.
\end{theorem}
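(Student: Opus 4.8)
The plan is to combine a Helly-type argument, which settles the case $k=1$ and, for a fixed ``direction'' of the flat, isolates the condition imposed by each single charge, with a Borsuk--Ulam-type topological argument that forces one flat to serve all $k$ charges at once. This is the strategy of Dol'nikov and of {\v Z}ivaljevi{\'c} and Vre{\'c}ica, and it has to be run carefully because the $\mu_i$ are only monotone and subadditive rather than additive.

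First I would pass to a linear model. Identify $\R^d$ with the affine hyperplane $\{x_{d+1}=1\}\subset\R^{d+1}$; then every open halfspace $h\in\HH$ is $\{x:\langle x,v\rangle>0\}\cap\R^d$ for a unique $v\in S^d$, so each $\mu_i$ lifts to $\tilde\mu_i(v):=\mu_i(h_v)$ on $S^d$, with antipodal points of $S^d$ giving the two open halfspaces bounded by a common hyperplane. A $(k-1)$-flat $\pi$ lifts to a $k$-dimensional linear subspace $\hat\pi\subset\R^{d+1}$, and one checks that $\pi\subseteq\partial h_v$ exactly when $v\in\hat\pi^{\perp}$, whose unit sphere is a $(d-k)$-sphere. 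More to the point, writing $U=\mathrm{dir}(\pi)$ and $p_U:\R^d\to\R^d/U\cong\R^{d-k+1}$ for the quotient projection, the open halfspaces of $\R^d$ containing $\pi$ are precisely the preimages $p_U^{-1}(g)$ of open halfspaces $g$ of $\R^{d-k+1}$ containing the point $\bar\pi:=p_U(\pi)$, and $g\mapsto\mu_i(p_U^{-1}(g))$ is again a charge on $\R^{d-k+1}$ of the same total mass $\mu_i(\R^d)$. Hence ``$\pi$ satisfies the conclusion for $\mu_i$'' is literally ``$\bar\pi$ is a $\tfrac1{d-k+2}$-centerpoint of the $i$-th pushforward charge'', and the whole problem becomes: choose the direction $U$ so that the $k$ pushforward charges admit a common centerpoint.

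The case $k=1$ ($U=\{0\}$, $\pi$ a point) is pure Helly, and I would prove it first, since it is also the per-charge core of the general case. In $\R^n$ with $n=d-k+1$, call $\bar p$ \emph{bad} for a charge $\nu$ if some open halfspace $g\ni\bar p$ has $\nu(g)<\nu(\R^n)/(n+1)$; a non-bad $\bar p$ is then exactly a point lying in the closed halfspace $\R^n\setminus g$ for every open $g$ with $\nu(g)<\nu(\R^n)/(n+1)$. Any $n+1$ of these closed halfspaces meet: otherwise the corresponding open halfspaces $g_1,\dots,g_{n+1}$ would cover $\R^n$, whence subadditivity gives $\nu(\R^n)\le\sum_{j=1}^{n+1}\nu(g_j)<(n+1)\cdot\nu(\R^n)/(n+1)$, a contradiction. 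Helly's theorem then produces a common point, the desired centerpoint.

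The main obstacle is the jump from $k=1$ to general $k$: one flat must work for all $k$ charges, and this is genuinely topological (for $k=d$ it is exactly the ham--sandwich theorem, which is not a Helly consequence). I would introduce an equivariant configuration space $\mathcal{C}$ parametrising candidate flats together with an orientation of each of the $k$ constraints --- built from the sphere $S^d$ of lifted halfspaces and the sub-spheres $S(\hat\pi^{\perp})$, with a free $\mathbb{Z}_2$-action --- and an antipodal-equivariant continuous test map $f:\mathcal{C}\to\R^{N}$ whose $i$-th block records the signed centerpoint deficiency of the $i$-th pushforward charge for the current flat and orientation (a well-defined continuous quantity by the paragraph above). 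If no common centerpoint existed, $f$ would be nowhere zero, so $f/\|f\|$ would be an equivariant map $\mathcal{C}\to S^{N-1}$; comparing the $\mathbb{Z}_2$-indices of $\mathcal{C}$ and of $S^{N-1}$ (Borsuk--Ulam, or Dold's theorem) gives a contradiction. I expect two points to need the most care: choosing $\mathcal{C}$ and the target so the index comparison is sharp --- this is exactly where the constant $d-k+2$ is produced, and where joins of spheres rather than a single Euclidean target are usually required --- and the ``charges, not measures'' issue, namely checking that $f$ is continuous and that its zeros are genuine good flats even at degenerate configurations (flats not in general position, degenerate lifts, and the open/closed discrepancy along $\partial h$); these are handled by using monotonicity to sandwich the relevant open halfspaces, perturbing away from degenerate flats, and invoking the hypothesis that each $\mu_i$ is concentrated on a bounded set, which also supplies the compactness needed to apply Helly to the infinite families above.
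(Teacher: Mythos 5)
A preliminary remark: the paper does not prove this statement. It is imported verbatim from Dol'nikov's article, and the authors never reproduce his argument (there is even a leftover author note debating whether to rewrite it for completeness). So there is no in-paper proof to measure you against; I can only judge your proposal on its own terms and against the known proofs. The parts you actually carry out are correct and match the standard skeleton. The lift to $\R^{d+1}$, the identification of the open halfspaces containing a $(k-1)$-flat $\pi$ with the preimages $p_U^{-1}(g)$ for $g$ an open halfspace of $\R^d/U\cong\R^{d-k+1}$ containing $p_U(\pi)$, and the observation that the pushforward of a charge under $p_U$ is again a charge of the same total mass concentrated on a bounded set, are all right, and they correctly recast the theorem as: some direction $U$ admits a common $\tfrac{1}{d-k+2}$-centerpoint for the $k$ pushforwards. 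Your Helly argument for $k=1$ is also sound: if $n+1$ bad halfspaces have closed complements with empty intersection they cover $\R^n$, and finite subadditivity over halfspaces (exactly the $\HH$-subadditivity the paper later isolates) gives the contradiction; concentration on a bounded set does supply the compactness for the infinite Helly step (one should add the small argument that every finite intersection already meets the closed convex hull of the concentration set, via a separating halfspace of charge zero, but that is routine).

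The genuine gap is the equivariant-topology step for $k\ge 2$, which is where the entire difficulty of the theorem sits --- as you note yourself, $k=d$ is the ham-sandwich theorem. Your plan hinges on a continuous antipodally equivariant test map whose coordinates record a ``signed centerpoint deficiency,'' i.e., ultimately on quantities of the form $\mu_i(h_v)-\mu_i(\R^d)/(d-k+2)$ varying continuously with the halfspace $h_v$. For charges no such continuous map exists: a charge need not be continuous in any sense --- the paper's own application $\mu_A$ is integer-valued, so $v\mapsto\mu_A(h_v)$ is a step function --- and, unlike the case of finite measures with atoms, there is no absolutely continuous approximation to pass to, nor a canonical continuous selection from the (nonempty, compact, convex, but not continuously varying) sets of deep points. ``Sandwiching by monotonicity'' and ``perturbing away from degenerate flats'' do not manufacture continuity. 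This is precisely the obstacle the published proofs are engineered around (they apply Borsuk--Ulam/Dold-type index arguments to auxiliary continuous functions built from the geometry of the convex sets of deep points, not from the charge values themselves), and it is the step your outline names but does not carry out. As written, the proposal establishes the $k=1$ case and the correct reduction for general $k$, but it does not prove the theorem.
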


A careful reading of the proof of this theorem reveals that its
statement can be generalized, and the assumptions on $\mu_i$
weakened. We first notice that the subadditive property is only used
in the proof for taking the union of a finite number of halfspaces from $\HH$.
Therefore, define $\mu$ to be \emph{$\HH$-subadditive} if 
$$\mu(\cup_{h\in H}) \leq \sum_{h\in H} \mu(h)$$
for any finite set $H\subset \HH$ of halfspaces. 

Next, notice that in order for the proof to go through, the set
function $\mu$ need not be real-valued.
Recall \cite{bourbaki--a13} that a \emph{totally ordered unital magma} $(M,\oplus,\leq, e)$ is a
totally ordered set $M$ endowed with a binary operator $\oplus$ such
that $M$ is closed under $\oplus$ operations, $\oplus$ has \emph{neutral}
element $e$ (i.e., $x\oplus e = x = e \oplus x$) and is \emph{monotone} 
(i.e., $a \oplus c \leq b \oplus c$ and $c \oplus a \leq c \oplus b$ 
whenever $a \leq b$). Further, for all $x\in M$ and $c\in \N$, define
the \emph{$c^{th}$ multiple of $x$} as 
$cx := \oplus^c x := \underbrace{x\oplus (x \oplus( \ldots \oplus x)\ldots)}_{\mbox{$c$ times}}$. 

Then, it suffices that
$\mu$ take values over $M$, and use $e$ as the 0 used in the
definition of a concentrated set function above. It is then the
addition operator $\oplus$ which is to be used in the definition of
the subadditive (or $\HH$-subadditive) property and in the proof of
the theorem. 
Thus, just by reading the proof of Dol'nikov under this new light we have:

\begin{theorem} 
Let $\mu_i$, $i=1,\ldots,k$ be $k$ set functions defined on $\G$ and
taking values in a totally ordered unital magma $(M,\oplus,\leq, e)$. 
%Let $\delta_i\in M$ be such that $(d-k+2)\delta_i \leq\mu_i(\R^d)$. 
 If the functions $\mu_i$ are monotone $\HH$-subadditive and
concentrated on bounded sets, there exists a $(k-1)$-flat $\pi$ such that 
$$(d-k+2)\mu_i(h) \geq \mu_i(\R^d), i=1,\ldots,k,$$
for every open halfspace $h\in\HH$ containing $\pi$.
\end{theorem}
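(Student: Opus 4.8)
The plan is to run Dol'nikov's proof of the center transversal theorem essentially line by line, keeping a ledger of which properties of the $\mu_i$ and which arithmetic operations on their values it actually uses, and then checking that every entry survives the two announced relaxations: subadditivity weakened to $\HH$-subadditivity, and real values replaced by values in a totally ordered unital magma $(M,\oplus,\le,e)$, with $e$ in the role of $0$, $\oplus$ in the role of $+$, and the natural multiple $cx=\oplus^c x$ in the role of multiplication by a natural number. Because a magma has no division, the target inequality must be written in the multiplicative form $(d-k+2)\mu_i(h)\ge\mu_i(\R^d)$ rather than as $\mu_i(h)\ge\mu_i(\R^d)/(d-k+2)$; this is the only cosmetic change from Dol'nikov's phrasing.

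First I would isolate what may be called the combinatorial-topological core of Dol'nikov's argument. Assuming, toward a contradiction, that the conclusion fails, this core --- a Borsuk--Ulam-type step applied to a test map built from the geometry of $(k-1)$-flats and directions --- produces a single $(k-1)$-flat $\pi$ together with, for each index $i$, a family of at most $c:=d-k+2$ open halfspaces whose union is $\R^d$ and each of which is \emph{deficient} for $\mu_i$, meaning $c\,\mu_i(h)<\mu_i(\R^d)$. The essential observation is that this core never evaluates the $\mu_i$: it manipulates only flats, halfspaces, spheres of directions, and continuous maps between finite-dimensional spaces constructed from them, so it is completely insensitive to the codomain of $\mu_i$. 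The values of the $\mu_i$ enter the proof in exactly two places --- in the \emph{definition} of ``deficient'', which needs only the order relation $\le$ that $M$ carries, and in the closing contradiction --- so the core transfers verbatim.

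It remains to redo the closing contradiction with magma arithmetic. Each numerical inequality Dol'nikov draws between values of the $\mu_i$ on halfspaces is of one of three kinds: a monotonicity comparison $\mu_i(X)\le\mu_i(Y)$ for $X\subseteq Y$; the vanishing $\mu_i(h)=e$ when $h$ misses the bounded set on which $\mu_i$ is concentrated; or a bound $\mu_i(g_1\cup\cdots\cup g_m)\le\mu_i(g_1)\oplus\cdots\oplus\mu_i(g_m)$ over finitely many halfspaces, which is exactly $\HH$-subadditivity. Given the covering $\R^d=h_1\cup\cdots\cup h_c$ of deficient halfspaces, let $h$ be one among $h_1,\dots,h_c$ with largest value $\mu_i(h)$ (it exists since $M$ is totally ordered). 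Then, using $\HH$-subadditivity and then replacing each summand by $\mu_i(h)$ via monotonicity of $\oplus$,
$$\mu_i(\R^d)\le \mu_i(h_1)\oplus\cdots\oplus\mu_i(h_c)\le \mu_i(h)\oplus\cdots\oplus\mu_i(h)=c\,\mu_i(h)=(d-k+2)\,\mu_i(h),$$
contradicting the deficiency $(d-k+2)\mu_i(h)<\mu_i(\R^d)$. This is the whole argument.

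The one place where I expect to have to slow down --- the only genuine subtlety --- is that a unital magma need be neither associative nor commutative, so ``$\mu_i(g_1)\oplus\cdots\oplus\mu_i(g_m)$'' and ``$c\,x$'' have meaning only once a bracketing is fixed. I would adopt the right-nested convention already used to define $\oplus^c x$ throughout, and observe that the two $c$-fold expressions compared in the display have the \emph{same} shape: the passage from the first to the second changes one argument of one $\oplus$ at a time, each such change being legitimate because $\oplus$ is monotone in each of its arguments. Two further bookkeeping checks then finish the job: that the covering produced by the core is genuinely of $\R^d$ (so that $\mu_i(\R^d)=\mu_i(h_1\cup\cdots\cup h_c)$ is an identity we may invoke), and that ``concentrated on a bounded set'' is used precisely where Dol'nikov uses it, namely to make the compactness argument that selects $\pi$ go through and to keep the relevant halfspaces from being vacuous. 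With these checked, Dol'nikov's proof yields the magma-valued statement unchanged.
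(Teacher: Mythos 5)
Your proposal is correct and follows essentially the same route as the paper: the paper's own ``proof'' of this theorem is exactly the observation that Dol'nikov's argument uses only monotonicity, unions of finitely many halfspaces (hence $\HH$-subadditivity), the total order on the values, and $c$-fold $\oplus$-combinations, so it survives both relaxations unchanged. Your audit --- isolating the topological core, which never evaluates the $\mu_i$, and redoing the closing covering contradiction via the maximal deficient halfspace, monotonicity of $\oplus$, and a fixed bracketing for $\oplus^c$ --- is in fact more explicit than what the paper records, since the authors merely assert that rereading Dol'nikov under the new definitions suffices.
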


%The proof of this theorem uses the following generalization of Helly's
%theorem.
%\begin{theorem}[Transversal theorem\cite{xxx}]
%For $k$ arbitrary families $\C_1,\ldots,\C_k$ of compact convex sets in
%$\R^d$ such that in each family, every subfamily of $(d-k+2)$ sets has
%a common point, then there exists a $k-1$-flat that intersects all
%sets of each family. 
%\end{theorem}
%
%For the sake of completeness, we rewrite here the proof of the new
%version of the central transversal theorem.

\xxx[Stefan]{Not sure yet if we should rewrite the proof for
  completeness.}

\section{Center transversal theorem for arrangements}\label{sec:arrangements}

Let $A$ be an arrangement of $n$ hyperplanes in $\R^d$. We write $V(A)$ for
the set of all vertices (intersection points between any $d$ hyperplanes) of $A$ and
$CH(A) = CH(V(A))$ for the convex hull of those points. In the
arguments that follow, by abuse of language, we will write $A$ and
mean $V(A)$ or $CH(A)$. For example, we say that 
the arrangement $A$ is \emph{above hyperplane $h$} when all points in
$V(A)$ are above $h$. More generally for a region $Q$ in 
$\R^d$, we say that the arrangement $A$ does not intersect $Q$ if
$CH(A)$ does not intersect $Q$. We say that the $k$ arrangements
$A_1, A_2, \ldots, A_k$ are \emph{disjoint} if their convex hulls do not
intersect. They are \emph{separable} if they are disjoint and no hyperplane
intersects $d+1$ of them simultaneously.

Let $\HH$ be the set of all open halfspaces in $\R^d$ and let $\G$ be
a family of subsets of $\R^d$ closed under union operations and that contains $\HH$.
For any set $S\in\G$,  let $\mu_A(S)$ be the
maximum number of hyperplanes of $A$ that have all their vertices
inside of $S$, that is,
$$\mu_A(S) = \max_{A'\subseteq A, V(A')\subseteq S} |A'|.$$ 
In particular, $\mu_A(\R^d) = \mu_A(CH(A)) = n$ and $\mu_A(\emptyset) = d-1$.

\subsection{Lines in $\R^2$}\label{sec:lines}
We start with the planar case. Thus, $A$ is a set of lines in $\R^2$,
and $\HH$ is the set of all open halfplanes.
Recall the Erd\H{o}s-Szekeres theorem \cite{es-cpg-35}
\begin{theorem}[Erd\H{o}s-Szekeres]\label{thm:erdos-szekeres}
For all integers $r$, $s$, any sequence of $n>(r-1)(s-1)$ numbers
contains either a non-increasing subsequence of length $r$
or an increasing subsequence of length $s$.
\end{theorem}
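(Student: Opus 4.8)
The plan is to use the classical pigeonhole (Dilworth-type) labeling argument. Suppose, toward a contradiction, that a sequence $a_1,\ldots,a_n$ with $n>(r-1)(s-1)$ contains neither a non-increasing subsequence of length $r$ nor a (strictly) increasing subsequence of length $s$. To each index $i$ I would attach a pair $(x_i,y_i)$, where $x_i$ is the length of the longest strictly increasing subsequence ending at position $i$, and $y_i$ is the length of the longest non-increasing subsequence ending at position $i$.

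The first step is to pin down the ranges of these labels. By the assumed absence of the two forbidden configurations, $x_i\in\{1,\ldots,s-1\}$ and $y_i\in\{1,\ldots,r-1\}$ for every $i$, so each pair $(x_i,y_i)$ lies in a set of size exactly $(r-1)(s-1)$.

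The second — and the only substantive — step is to show that $i\mapsto(x_i,y_i)$ is injective. For indices $i<j$ I would split according to the dichotomy $a_i<a_j$ or $a_i\ge a_j$. In the first case, prepending $a_i$ to a longest increasing subsequence ending at $i$ produces one ending at $j$, so $x_j\ge x_i+1$; in the second, the same move applied to a longest non-increasing subsequence gives $y_j\ge y_i+1$. Either way $(x_i,y_i)\neq(x_j,y_j)$. Injectivity then forces $n\le(r-1)(s-1)$, contradicting the hypothesis, and the theorem follows.

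The argument is essentially routine, so I do not anticipate a genuine obstacle; the one point that needs care is keeping the strict/non-strict conventions consistent between the two types of subsequences, so that ``$a_i<a_j$'' versus ``$a_i\ge a_j$'' partitions \emph{every} comparison between an earlier and a later term. This is exactly what makes the two forbidden lengths enter multiplicatively as $(r-1)(s-1)$ rather than, say, both referring to strict monotonicity.
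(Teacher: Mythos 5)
Your argument is the standard Seidenberg pigeonhole proof of Erd\H{o}s--Szekeres and it is correct; the paper itself only cites this theorem from the 1935 paper and gives no proof, so there is nothing to compare against. The dichotomy $a_i<a_j$ versus $a_i\ge a_j$ is matched correctly to the strictly-increasing/non-increasing asymmetry in the statement, which is the one place such proofs usually go wrong; the only slip is cosmetic --- you ``prepend $a_i$'' where you mean ``append $a_j$'' to the longest subsequence ending at $i$.
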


We show:
\begin{lemma}\label{lem:union-bound}
For any two sets $S_1\in \HH$ and $S_2 \in \G$, 
$$\mu_A(S_1\cup S_2) \leq \mu_A(S_1)\mu_A(S_2)$$
\end{lemma}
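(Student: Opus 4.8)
The plan is to reduce the bound to a single application of the Erd\H{o}s--Szekeres theorem (\thmref{erdos-szekeres}). Fix any $A'\subseteq A$ with $V(A')\subseteq S_1\cup S_2$; it suffices to prove $|A'|\le\mu_A(S_1)\mu_A(S_2)$, since taking the maximum over such $A'$ yields the lemma. Choose a coordinate system in which the bounding line $\partial S_1$ is the $x$-axis and $S_1=\{(x,y):y>0\}$. To each line $g\in A'$ I assign a pair $(\kappa(g),v(g))$ drawn from a totally ordered set: if $g$ is not horizontal, let $\kappa(g)\in(0,\pi)$ be the angle between $g$ and the positive $x$-axis and let $v(g)$ be the abscissa of the point $g\cap\partial S_1$; if $g$ is a horizontal line $y=c$ with $c>0$, set $\kappa(g)$ to be smaller than every angle and $v(g)=-\infty$; and if $g$ is a horizontal line $y=c$ with $c\le 0$ (in particular if $g=\partial S_1$ itself), set $\kappa(g)$ smaller than every angle and $v(g)=+\infty$. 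Order $A'=\{g_1,\dots,g_m\}$ so that $\kappa(g_1)\le\cdots\le\kappa(g_m)$, breaking ties arbitrarily. Since $A'$ is finite one may replace $\pm\infty$ by a sufficiently large, resp.\ small, real so that $(v(g_1),\dots,v(g_m))$ is an honest sequence of numbers.

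The engine of the proof is the following criterion: for $i<j$, if $v(g_i)<v(g_j)$ then the point $g_i\cap g_j$ (if it exists) lies in $S_1$, and if $v(g_i)\ge v(g_j)$ then $g_i\cap g_j$ does not lie in $S_1$. For non-horizontal $g_i,g_j$ with $\kappa(g_i)<\kappa(g_j)$ this is a short trigonometric computation: writing $g$ as the locus of $(v(g)+t\cos\kappa(g),\,t\sin\kappa(g))$, one solves for the intersection and gets $t=\bigl(v(g_j)-v(g_i)\bigr)\sin\kappa(g_j)/\sin\!\bigl(\kappa(g_j)-\kappa(g_i)\bigr)$ as the parameter value along $g_i$; since $\kappa(g_i),\kappa(g_j),\kappa(g_j)-\kappa(g_i)$ all lie in $(0,\pi)$, the $y$-coordinate $t\sin\kappa(g_i)$ of the intersection has the same sign as $v(g_j)-v(g_i)$, which is exactly the criterion. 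If $\kappa(g_i)=\kappa(g_j)$ then $g_i\parallel g_j$ and the statement is vacuous; if exactly one of $g_i,g_j$ is horizontal, it comes first in the order, it meets the other line at height $c$, and the value $\mp\infty$ assigned to its $v$ makes the criterion hold in both the $c>0$ and $c\le 0$ cases.

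With the criterion in hand, apply \thmref{erdos-szekeres} to $(v(g_1),\dots,v(g_m))$ with $r=\mu_A(S_2)+1$ and $s=\mu_A(S_1)+1$. An increasing subsequence $g_{i_1},\dots,g_{i_k}$ (so $i_1<\cdots<i_k$ and $v(g_{i_1})<\cdots<v(g_{i_k})$) has all of its pairwise intersection points in $S_1$, so $V(\{g_{i_1},\dots,g_{i_k}\})\subseteq S_1$ and hence $k\le\mu_A(S_1)$. A non-increasing subsequence $g_{i_1},\dots,g_{i_k}$ has all of its pairwise intersection points outside $S_1$; as these points also lie in $V(A')\subseteq S_1\cup S_2$, they lie in $S_2$, so $V(\{g_{i_1},\dots,g_{i_k}\})\subseteq S_2$ and $k\le\mu_A(S_2)$. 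Thus there is neither an increasing subsequence of length $s$ nor a non-increasing one of length $r$, and \thmref{erdos-szekeres} forces $m\le(r-1)(s-1)=\mu_A(S_1)\mu_A(S_2)$, as desired.

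The main obstacle is establishing the criterion linking the order of the directions of the lines to the order of their intercepts on $\partial S_1$ --- it is the only place geometry enters, and it is what lets the two kinds of monotone subsequences in Erd\H{o}s--Szekeres correspond to ``all intersections in $S_1$'' and ``all intersections in $S_2$'', respectively. Beyond that computation, the only subtlety is the bookkeeping for lines parallel to $\partial S_1$ (including $\partial S_1$ itself), handled above by assigning their intercept value $\pm\infty$ according to which side of $\partial S_1$ they lie on; lines of $A'$ that happen to be mutually parallel are harmless since they contribute no intersection point and so may be ordered arbitrarily.
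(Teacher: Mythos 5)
Your proposal is correct and follows essentially the same route as the paper's proof: order the lines of $A'$ by direction, record their intercepts on the line bounding $S_1$, observe that increasing pairs meet inside $S_1$ while non-increasing pairs meet outside it, and invoke the Erd\H{o}s--Szekeres theorem to bound $|A'|$ by $\mu_A(S_1)\mu_A(S_2)$. The only differences are cosmetic (a coordinate frame rotated by $90$ degrees), plus your explicit verification of the sign criterion and your careful bookkeeping for lines parallel to $\partial S_1$, both of which the paper leaves implicit.
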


\begin{proof}
Let $\ell$ be a line defining two open halfplanes $\ell^+$ and $\ell^-$
such that  $S_1 = \ell^-$ and let $S_2' = S_2\setminus\ell^-$.
Rotate and translate the plane so that $\ell$ is the (vertical) $y$
axis, and $\ell^+$ contains all points with positive $x$ coordinate.
Let $A'$ be a maximum cardinality subset of $A$ such that
$V(A')\subseteq S_1\cup S_2$. Let $l_1,\ldots, l_{|A'|}$ be the
lines in $A'$ ordered by increasing order of their slopes, and let 
$Y = (y_1,\ldots,y_{|A'|})$ be the $y$ coordinates of the
intersections of the lines $l_i$ with line $\ell$, in the same order.
For any set $A_1\subseteq A'$ such that the $y_i$ values of the lines in
$A_1$ form an increasing subsequence in  $Y$,
notice that $V(A_1)\subseteq S_1$. Likewise, for any set
$A_2\subseteq A'$ that forms a non-increasing subsequence in $Y$, we  
have $V(A_2)\subseteq S_2'$. Any such set $A_1$ is of size
$|A_1|\leq\mu_A(S_1)$ and any such set $A_2$ is of size
$|A_2|\leq\mu_A(S_2')\leq \mu_A(S_2)$. 

Therefore, $Y$ has no non-decreasing subsequence of length
$\mu_A(S_1)+1$ and no non-increasing subsequence of length
$\mu_A(S_2)+1$, and so by Theorem~\ref{thm:erdos-szekeres}, 
$\mu_A(S_1\cup S_2) = |A'| = |Y|\leq\mu_A(S_1)\mu_A(S_2)$.
\end{proof}

\begin{corollary}
The set function $\mu_A$ takes values in the totally ordered unital magma
$(\R_0^+,\cdot,\leq, 1)$; it is monotone and $\HH$-subadditive. 
\end{corollary}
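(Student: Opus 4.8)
The plan is to verify the three assertions of the corollary in turn, the last of which is the only one requiring an argument, and that argument is a short induction on top of \lemref{union-bound}. First I would check that $(\R_0^+,\cdot,\leq,1)$ really is a totally ordered unital magma: it is closed under multiplication, $1$ is a two-sided neutral element, and multiplication is monotone on the non-negative reals (if $a\leq b$ then $ac\leq bc$ and $ca\leq cb$ for every $c\geq 0$). Since $\mu_A(S)$ is by definition the cardinality of a finite set of hyperplanes, it is a non-negative integer — in fact it lies between $\mu_A(\emptyset)=d-1=1$ and $\mu_A(\R^2)=n$ — so $\mu_A$ indeed takes its values in $\R_0^+$. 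Monotonicity is immediate from the definition as a maximum: if $X\subseteq Y$, then every $A'\subseteq A$ with $V(A')\subseteq X$ also satisfies $V(A')\subseteq Y$, so the collection of candidate subsets defining $\mu_A(Y)$ contains that defining $\mu_A(X)$, and hence $\mu_A(X)\leq\mu_A(Y)$.

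For $\HH$-subadditivity — which, read in this magma, means $\mu_A(\bigcup_{h\in H}h)\leq\prod_{h\in H}\mu_A(h)$ for every finite $H\subset\HH$ — I would induct on $|H|$. The base case $H=\emptyset$ holds with equality: the union is $\emptyset$, $\mu_A(\emptyset)=d-1=1$, and the empty product is the neutral element $1$. For the inductive step, pick some $h_1\in H$ and set $S_1=h_1\in\HH$ and $S_2=\bigcup_{h\in H\setminus\{h_1\}}h$; since $\G$ contains $\HH$ and is closed under unions, $S_2\in\G$, so \lemref{union-bound} applies and gives $\mu_A(\bigcup_{h\in H}h)=\mu_A(S_1\cup S_2)\leq\mu_A(S_1)\,\mu_A(S_2)$. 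Applying the induction hypothesis to $S_2$ then yields $\mu_A(\bigcup_{h\in H}h)\leq\mu_A(h_1)\prod_{h\in H\setminus\{h_1\}}\mu_A(h)=\prod_{h\in H}\mu_A(h)$, closing the induction.

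I do not expect a genuine obstacle here: \lemref{union-bound} already carries all the real content (through the Erd\H{o}s--Szekeres theorem), and what remains is bookkeeping. The only points that need a moment's care are confirming the magma axioms for $(\R_0^+,\cdot,\leq,1)$ and arranging the base case of the induction so that the empty union is matched against the multiplicative neutral element $1=\mu_A(\emptyset)$ rather than against an additive $0$; getting this alignment right is exactly what makes the generalized center transversal theorem of Section~\ref{sec:cent-transv} applicable to $\mu_A$.
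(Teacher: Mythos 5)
Your proof is correct and takes exactly the route the paper intends: the corollary is stated without proof as an immediate consequence of \lemref{union-bound}, and your induction on $|H|$ --- peeling off one halfplane $S_1=h_1\in\HH$ at a time and using closure of $\G$ under unions for $S_2$, with the empty union matched to the multiplicative neutral element $1=\mu_A(\emptyset)=d-1$ (valid since $d=2$ here) --- is precisely the routine verification the authors leave implicit. Nothing is missing.
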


We can thus apply the generalized center transversal theorem with
$k=2$ to obtain a ham-sandwich cut theorem: 
\begin{theorem}\thmlabel{sep}
For any arrangements $A_1$ and $A_2$ of lines in $\R^2$, there exists a
line $\ell$ bounding closed halfplanes $\ell^+$ and $\ell^-$ and sets
$A_i^\sigma$, $i\in{1,2}$, $\sigma\in{+,-}$ such that 
$A_i^\sigma\subseteq A_i, |A_i^\sigma|\geq |A_i|^{1/2}$, and
$V(A_i^\sigma)\in \ell^\sigma$.    
\end{theorem}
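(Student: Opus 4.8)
The plan is to apply the magma-valued center transversal theorem proved above to the two set functions $\mu_{A_1}$ and $\mu_{A_2}$, with $d=2$ and $k=2$. By the preceding corollary, each $\mu_{A_i}$ is monotone, $\HH$-subadditive, and takes values in the totally ordered unital magma $(\R_0^+,\cdot,\le,1)$, so the only remaining hypothesis to check is that $\mu_{A_i}$ is concentrated on a bounded set. The natural candidate is $CH(A_i)$, which is bounded since $V(A_i)$ is finite: if an open halfplane $h$ misses $CH(A_i)$, then no two lines of $A_i$ meet inside $h$ (their intersection point is a vertex of $A_i$, and every vertex lies in $CH(A_i)$), so the only subsets $A'\subseteq A_i$ with $V(A')\subseteq h$ consist of a single line, giving $\mu_{A_i}(h)=1$, the neutral element of the magma, as required. (This uses that two or more lines of $A_i$ always create a vertex, i.e.\ that the lines are in general position; to allow parallel classes one would instead invoke the non-concentrated version of the center transversal theorem mentioned in the footnote above, which the magma-valued reformulation also covers.)

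Applying the theorem produces a $(k-1)$-flat — here, since $k=2$, a line — which we call $\ell$. As $d-k+2=2$ and the $c$-th multiple of $x$ in the multiplicative magma is $x^c$, the conclusion reads
$$\mu_{A_i}(h)^2\ \ge\ \mu_{A_i}(\R^2)\ =\ |A_i| \qquad (i=1,2)$$
for every open halfplane $h$ containing $\ell$; equivalently, $\mu_{A_i}(h)\ge |A_i|^{1/2}$ for every such $h$. Thus the line delivered by the transversal theorem is the $\ell$ we want.

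What remains is to replace ``open halfplanes containing $\ell$'' by the two \emph{closed} halfplanes $\ell^+,\ell^-$ bounded by $\ell$, and to read off the witnessing subsets; this is the only step with real content, though it is routine. An open halfplane contains the line $\ell$ exactly when its boundary is parallel to $\ell$ and strictly to one side of it, so these halfplanes fall into two nested families that decrease to $\ell^+$ and to $\ell^-$ respectively. Taking, for each $\sigma\in\{+,-\}$, a sequence of such halfplanes decreasing to $\ell^\sigma$, the values of $\mu_{A_i}$ along it are non-increasing and integer-valued, hence eventually equal to some constant $m\ge |A_i|^{1/2}$; picking a witnessing subset of size $m$ for cofinally many members and using that $A_i$ has only finitely many subsets, pigeonhole yields a single $A_i^\sigma\subseteq A_i$ with $|A_i^\sigma|\ge |A_i|^{1/2}$ whose vertices lie in every member of the family, hence in their intersection $\ell^\sigma$. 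Carrying this out for $\sigma\in\{+,-\}$ and $i\in\{1,2\}$ produces exactly the four sets asserted. (If ``$h$ containing $\ell$'' is instead read as ``$\overline h\supseteq\ell$'', this paragraph is unnecessary: $\ell^\sigma$ then already contains an admissible open halfplane, and monotonicity of $\mu_{A_i}$ finishes at once.) The main obstacle is bookkeeping — translating the additively stated center transversal theorem through the multiplicative magma, and handling the open-versus-closed boundary — and I expect this open-to-closed limiting step to be where the small amount of care is actually needed.
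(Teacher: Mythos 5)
Your proposal is correct and follows essentially the same route as the paper, which simply applies the generalized (magma-valued) center transversal theorem with $k=2$ to $\mu_{A_1},\mu_{A_2}$ in the multiplicative magma $(\R_0^+,\cdot,\leq,1)$. The details you supply beyond the paper's one-line derivation --- checking concentration on $CH(A_i)$ (with the caveat about parallel classes) and the limiting/pigeonhole argument passing from open halfplanes containing $\ell$ to the closed halfplanes $\ell^\sigma$ --- are exactly the right ones and are handled correctly.
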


Note that this statement is similar to the result of Aronov~\etal\cite{mut} on mutually avoiding sets. Specifically, two sets $A$ and $B$ of points in the plane are {\em mutually avoiding} if no line through a pair of points in $A$ intersects the convex hull of $B$, and vice versa. Note that, on the other hand, our notion of separability for lines is equivalent to the following definition in the dual. Two sets $A$ and $B$ of points in the plane are {\em separable} if there exists a point $x$ such that all the lines through pairs of points in $A$ are above $x$ and all the lines through pairs of points in $B$ are below $x$ or vice versa. Aronov~\etal\ show in Theorem $1$ of \cite{mut} that any two sets $A_1$ and $A_2$ of points contains two subsets $A_i'\subseteq A_i$, $|A_i'|\geq |A_i/12|^{1/2}$, $i\in\{1,2\}$ that are mutually avoiding. That this bound is tight, up to a constant, was proved by Valtr~\cite{val}. In the dual, \thmref{sep} states that for any two sets $A_1$ and $A_2$ of points in $\R^2$, there exists a point $\ell$ and sets
$A_i^\sigma$, $i\in{1,2}$, $\sigma\in{+,-}$ such that 
$A_i^\sigma\subseteq A_i, |A_i^\sigma|\geq |A_i|^{1/2}$, and  all lines through pairs of points in $A_i^+$ are above $\ell$ and all lines through pairs of points in $A_i^-$ are below $\ell$. While similar, neither the two results nor the two notions of mutually avoiding and separable are equivalent. It is not difficult to show that no result/notion immediately implies the other. Moreover, neither our proof of \thmref{sep} nor the proof of Theorem $1$ in \cite{mut}  give two sets that are, at the same time, mutually avoiding and separable. 

Note that the bound in \thmref{sep} is tight: assume $n$ is the
square of an integer. Construct the first line arrangement $A_1$ with
$\sqrt{n}$ pencils of $\sqrt{n}$ lines each, centered at points with
coordinates $(-1/2,i)$ for $i=1,\ldots,\sqrt{n}$, and the slopes of
the lines in pencil $i$ are distinct values in 
$[1/2-(i-1)/\sqrt{n},1/2-i/\sqrt{n}]$. 
Thus all intersections other than the pencil centers have $x$
coordinates greater than $1/2$. The line $x=0$ delimits two halfplanes
in which $\mu_{A_1}(x\leq 0) = \sqrt{n}$ since any set of more than
$\sqrt{n}$ lines have lines from different pencils which intersect on
the right of $x=0$, and $\mu_{A_1}(x\geq 0) = \sqrt{n}$ since any set
of more than $\sqrt{n}$ lines has two lines in the same pencil which
intersect left of $x=0$. Since $\mu_{A_1}$ is monotone, no vertical
line can improve this bound on both sides. Perturb the lines so that
no two intersection points have the same $x$ coordinate. For $A_2$, build a copy of
$A_1$ translated down, far enough so that no line through two vertices of
$A_1$ intersects $CH(A_2)$ and conversely. Therefore any line not
combinatorially equivalent to a vertical line (with respect to the vertices of $A_1$ and
$A_2$) does not intersect one of the arrangements and so there is no
better cut than $x=0$.

Applying the generalized center transversal theorem with
$k=1$ gives a centerpoint theorem with a bound of $|A|^{1/3}$. A
slightly more careful analysis improves that bound. 
\begin{theorem}
For any arrangement $A$ of lines in $\R^2$, there exists a point $q$
such that for every halfplane $h$ containing $q$ there is a set
$A'\subseteq A, |A'|\geq \sqrt{|A|/3}$, such that $V(A')\in h$.   
\end{theorem}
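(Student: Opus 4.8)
The plan is to extract the weak bound $n^{1/3}$ immediately from the machinery already in place and then repair the constant by a hands‑on Helly argument in the plane. By the Corollary above, $\mu_A$ is a monotone, $\HH$‑subadditive set function with values in the totally ordered unital magma $(\R^+_0,\cdot,\le,1)$, and it is concentrated on the bounded set $CH(A)$: any open halfplane $h$ disjoint from $CH(A)$ misses $V(A)$, so every $A'$ with $V(A')\subseteq h$ has $V(A')=\emptyset$ and hence $|A'|\le d-1=1$, i.e.\ $\mu_A(h)=1=e$. Applying the magma‑valued center transversal theorem with $k=1$ thus gives a point $q$ such that, for every open halfplane $h\ni q$, the $(d-k+2)=3$‑fold multiple of $\mu_A(h)$ — which in $(\R^+_0,\cdot)$ is $\mu_A(h)^3$ — is at least $\mu_A(\R^2)=n$; hence $\mu_A(h)\ge n^{1/3}$, and by monotonicity the same holds for every closed halfplane containing $q$.

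To replace $n^{1/3}$ by $\sqrt{n/3}$ I would re‑examine the Helly step in the proof of the centerpoint case, using the Erd\H{o}s--Szekeres structure of $\mu_A$ and not merely its abstract submultiplicativity. Put $m:=\lceil\sqrt{n/3}\rceil$ (the case $n\le 3$ being trivial since $\mu_A\ge d-1=1$ everywhere), call an open halfplane $h$ \emph{light} if $\mu_A(h)<m$, and let $\mathcal K$ be the family of closed halfplanes complementary to light halfplanes. By \lemref{union-bound}, $n=\mu_A(\R^2)\le\mu_A(h)\,\mu_A(\R^2\setminus h)$, so a light halfplane cannot contain all of $CH(A)$; moreover a sufficiently shallow halfplane in any direction misses $V(A)$ and so is light, whence $\mathcal K$ contains halfplanes with outward normals in all directions and $\bigcap\mathcal K$ is bounded. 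By Helly's theorem in $\R^2$ together with the usual compactness argument, $\bigcap\mathcal K\neq\emptyset$ as soon as every three members of $\mathcal K$ meet, and any $q\in\bigcap\mathcal K$ lies in no light halfplane, so (by monotonicity) every closed halfplane containing $q$ has $\mu_A\ge m\ge\sqrt{n/3}$ — exactly the theorem. Everything therefore reduces to the claim: \emph{if open halfplanes $h_1,h_2,h_3$ cover $\R^2$ then $\max_i\mu_A(h_i)\ge\sqrt{n/3}$, i.e.\ $n\le 3\,(\max_i\mu_A(h_i))^2$.}

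This claim is the heart of the matter and the step I expect to be hardest. The degenerate configurations are easy: if two of the $h_i$ already cover $\R^2$, or if two of their boundary lines are parallel, then \lemref{union-bound} yields $\max_i\mu_A(h_i)\ge\sqrt n$ outright. Otherwise $\ell_1,\ell_2,\ell_3$ form a triangle $T$ and each $h_i$ must be the open halfplane bounded by $\ell_i$ that contains $T$ (the only choice of sides whose union is $\R^2$); after a harmless perturbation assume no line of $A$ is parallel to any $\ell_i$ and no vertex of $A$ lies on any $\ell_i$, so $\mu_A(\ell_i^+)\mu_A(\ell_i^-)\ge n$ for each $i$ by \lemref{union-bound}. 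The natural attack is to run the Erd\H{o}s--Szekeres/Dilworth argument from the proof of \lemref{union-bound} \emph{once}: ordering the $n$ lines of $A$ by slope, a subsequence is decreasing along $\ell_1$ exactly when all its pairwise crossings lie in $h_1$, so by Dilworth's theorem $A$ splits into at most $\mu_A(h_1)$ slope‑monotone blocks whose pairwise crossings all lie in the complementary halfplane $g_1$, and $g_1$ is carved by the rays $\ell_2\cap g_1$ and $\ell_3\cap g_1$ into three cones contained respectively in $h_3$, in $h_2\cap h_3$, and in $h_2$. The obstacle is that iterating Erd\H{o}s--Szekeres blindly (split by $\ell_1$, then by $\ell_2$, then read off $\ell_3$) merely reproduces $n\le\mu_A(h_1)\mu_A(h_2)\mu_A(h_3)$, i.e.\ the $n^{1/3}$ bound again; turning this into an additive‑over‑three‑terms estimate $n\le\mu_A(h_1)^2+\mu_A(h_2)^2+\mu_A(h_3)^2$ requires a single Erd\H{o}s--Szekeres count that genuinely uses the pre‑labelling of the three cones of $g_1$ by the halfplane containing each, and pinning the constant down to exactly $3$ (rather than a weaker constant) is the delicate point. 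I would close by verifying tightness: three copies of the $\sqrt{n/3}\times\sqrt{n/3}$ pencil‑of‑pencils construction from after \thmref{sep}, one placed near each side of a triangle so that every cross‑copy crossing falls beyond the opposite vertex, give an arrangement whose three inner halfplanes cover the plane and each have $\mu_A=\sqrt{n/3}$, so no point does better than $\sqrt{n/3}$.
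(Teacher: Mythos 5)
Your overall framework is sound and in fact parallels the paper's: the magma-valued center transversal theorem with $k=1$ gives the $|A|^{1/3}$ bound, and the Helly-plus-compactness argument correctly reduces the improved bound to the claim that three ``light'' open halfplanes (each with $\mu_A<z=\sqrt{|A|/3}$) cannot cover $\R^2$. But that claim is exactly where your proposal stops: you state it, sketch an Erd\H{o}s--Szekeres/Dilworth attack, and then concede that the attack only reproduces the multiplicative bound $n\le\mu_A(h_1)\mu_A(h_2)\mu_A(h_3)$ and that you do not see how to get an estimate with the constant $3$. Since this is the entire content of the improvement from $n^{1/3}$ to $\sqrt{n/3}$, the proposal has a genuine gap at its central step.

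The paper closes this step with a short argument worth comparing against your attempt. From the Helly step one takes a point $q\in h_1\cap h_2\cap h_3$ (the degenerate situation in which two of the halfplanes already cover the plane, e.g.\ parallel boundaries, is dispatched by \lemref{union-bound}, as you note), and translates each $h_i$ to a halfplane $h_i'\subseteq h_i$ with $q$ on its boundary; the $h_i'$ are still light by monotonicity. The three boundary lines through $q$ cut the plane into six wedges, and every line of $A$ misses at least one of the three wedges $h_1'\cap\overline{h_2'}\cap\overline{h_3'}$, $\overline{h_1'}\cap h_2'\cap\overline{h_3'}$, $\overline{h_1'}\cap\overline{h_2'}\cap h_3'$. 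By pigeonhole some class $A'\subseteq A$ with $|A'|\ge|A|/3$ misses the same wedge, say the first; then all pairwise intersections of lines of $A'$ lie in $h_2'\cup h_3'$, and a \emph{single} application of \lemref{union-bound} to the subfamily $A'$ gives $|A|/3\le\mu_{A'}(h_2')\,\mu_{A'}(h_3')<z^2=|A|/3$, a contradiction. The idea your attack was missing is to pigeonhole the lines first (by which wedge they avoid) so that the submultiplicative bound is applied only once, to a union of two halfplanes and to a subfamily of size at least $|A|/3$; the factor $3$ comes from the pigeonhole, not from iterating the union bound over three halfplanes. (Your closing ``tightness'' paragraph is not part of the statement and is likewise unverified, but it does not affect the correctness question.)
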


\begin{proof}
Let $H$ be the set of halfplanes $h$ such that $\mu_A(h)<  z = \sqrt{|A|/3}$.
The \emph{halfspace depth} $\delta(q)$ is the minimum value of
$\mu_A(h)$ for any halfspace containing $q$. Therefore, the region of
depth $\geq z$ is the intersection of the complements $\overline{h}$ of the
halfplanes $h\in H$.  
If there is no point of depth $\geq z$ then the intersection
of the complements of halfplanes in $H$ is empty, and so (by Helly's
Theorem) there must be 3 halfplanes $h_1$, $h_2$, and $h_3$ in $H$ 
such that the intersection of their complements $\overline{h_1}\cap\overline{h_2}\cap\overline{h_3}$ is empty. 
But then, there is at least one point $q\in h_1\cap h_2\cap h_3$. 
Let $h'_i$ be the translated halfplanes $h_i$ with point $q$ on the
boundary. Since $h'_i\subseteq h_i$, $\mu_A(h'_i)\leq\mu_A(h_i)<z$. 
The point $q$ and the 3 halfplanes through it are witness that there
is no point of depth $\geq z$. 

The 3 lines bounding those 3 halfplanes divide the plane into 6
regions. 
Every line misses one of the three regions $h'_1\cap \overline{h'_2}\cap\overline{h'_3}$, 
$\overline{h'_1}\cap h'_2\cap \overline{h'_3}$, and $\overline{h'_1}\cap \overline{h'_2}\cap h'_3$.
%Every line misses 2 consecutive regions. 
Classify the
lines in $A$ depending on the first region it misses, clockwise. The
largest class $A'$ contains $\geq |A|/3$ lines.
%Note that one of the 3 halfplanes, say $h'_1$, contains the 2 regions
%missed by the lines in $A'$, the third region contained in $h'_1$ is
%also contained in one of the other halfplanes, say $h'_2$. Then,
%$\mu_{A'}(h'_2)\geq \mu_{A'}(h'_1)$. Moreover, 
Assume without loss of generality that all lines in $A'$ miss 
$h'_1\cap \overline{h'_2}\cap \overline{h'_3}$, then all intersections between
lines of $A'$ are in $h'_2\cup h'_3$. By Lemma~\ref{lem:union-bound},
$$|A|/3\leq |A'|= \mu_{A'}(h'_2\cup h'_3)\leq
\mu_{A'}(h'_2)\mu_{A'}(h'_3)< z^2 = |A|/3,$$
a contradiction.
\end{proof}

\xxx[Stefan]{Show that it is tight.}
\xxx[Stefan]{Add separated $\alpha \beta$ cut?}

\subsection{Hyperplanes in $\R^d$}\label{sec:hyperplanes}
We first briefly review a bichromatic version of Ramsey's theorem for
hypergraphs.

\begin{theorem}
For all $p, a, b\in\N$, there is a natural number $R = R_p(a,b)$ such that for
any set $S$ of size $R$ and any 2-colouring $c:{S \choose p} \rightarrow \{1,2\}$ of all subsets
of $S$ of size $p$, there is either a set $A$ of size $a$ such that
all $p$-tuples in $A \choose p$ have colour 1 or a set $B$ of size $b$
such that all $p$-tuples in $B\choose p$ have colour 2.
\end{theorem}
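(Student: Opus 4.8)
The plan is the classical Erd\H{o}s--Rado induction on the uniformity $p$, with a greedy vertex-extraction at each level controlled by a simple pigeonhole bound. The degenerate cases ($a<p$ or $b<p$, where a set of size $a$, respectively $b$, already works vacuously) are trivial, so assume $a,b\ge p\ge 1$. For the base case $p=1$, a colouring of the singletons is just a $2$-colouring of the elements of $S$, so pigeonhole gives $R_1(a,b)=a+b-1$.

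For the inductive step I assume $R_{p-1}$ is defined for all parameter pairs, set $m=a+b-1$, and define a rapidly growing sequence $g(0)=1$, $g(j)=R_{p-1}(g(j-1),g(j-1))+1$; I claim $R_p(a,b)\le g(m)$. Starting from $S_0=S$ with $|S_0|=g(m)$, I would perform $m$ rounds. In round $i$ I pick an arbitrary $v_i\in S_{i-1}$ and consider the auxiliary $2$-colouring $T\mapsto c(T\cup\{v_i\})$ of the $(p-1)$-subsets $T$ of $S_{i-1}\setminus\{v_i\}$; applying the inductive hypothesis yields a subset $S_i\subseteq S_{i-1}\setminus\{v_i\}$, all of whose $(p-1)$-subsets receive a single colour $\chi(i)\in\{1,2\}$ under this auxiliary colouring. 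A short induction shows $|S_{i-1}|\ge g(m-i+1)$ at every round, hence $|S_{i-1}\setminus\{v_i\}|\ge R_{p-1}(g(m-i),g(m-i))$ and $S_i$ can be taken with $|S_i|\ge g(m-i)$, so all $m$ rounds go through.

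The crux is then the observation that for any $i_1<i_2<\cdots<i_p$ one has $c(\{v_{i_1},\ldots,v_{i_p}\})=\chi(i_1)$: the elements $v_{i_2},\ldots,v_{i_p}$ all lie in $S_{i_1}$, so $\{v_{i_2},\ldots,v_{i_p}\}$ is a $(p-1)$-subset of $S_{i_1}$ and hence receives colour $\chi(i_1)$ in the auxiliary colouring attached to $v_{i_1}$. Finally, among the $m=a+b-1$ values $\chi(1),\ldots,\chi(m)$ either at least $a$ equal $1$ or at least $b$ equal $2$ (otherwise their number is at most $a+b-2$); taking $a$, respectively $b$, of the corresponding $v_i$'s and invoking the observation yields a set of the required size all of whose $p$-subsets are monochromatic of the desired colour. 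This completes the induction.

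I do not anticipate a genuine obstacle here, as this is a textbook argument; the only step demanding care is the size bookkeeping, i.e.\ checking that $|S_{i-1}|\ge g(m-i+1)$ persists so that all $m$ extractions can be performed, and the sequence $g$ is rigged precisely to make this work. A sharper bound would follow from the standard recursion $R_p(a,b)\le R_{p-1}(R_p(a-1,b),R_p(a,b-1))+1$ together with a secondary induction on $a+b$, but for the applications in this paper only the existence of a finite $R_p(a,b)$ is needed.
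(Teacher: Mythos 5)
Your argument is correct: it is the standard Erd\H{o}s--Rado induction on the uniformity $p$, and every step checks out --- the nesting $S_m\subseteq\cdots\subseteq S_0$ guarantees that for $i_1<\cdots<i_p$ the elements $v_{i_2},\ldots,v_{i_p}$ all lie in $S_{i_1}$, the bookkeeping with $g(j)=R_{p-1}(g(j-1),g(j-1))+1$ does ensure all $m=a+b-1$ extractions succeed, and the final pigeonhole on $\chi(1),\ldots,\chi(m)$ delivers the monochromatic set. (Two cosmetic points you implicitly use: the $v_i$ are pairwise distinct because $v_i\notin S_i$, and $R_p$ is monotone in the sense that a set of size at least $R$ also works, obtained by restricting to an $R$-subset.) Note, however, that the paper offers no proof to compare against: this statement appears there only as a ``brief review'' of the bichromatic Ramsey theorem for hypergraphs, cited as a classical fact, since all the authors need downstream (in their Lemma on $\mu_A(S_1\cup S_2)$ and the magma $(\N_{\geq d-1},\oplus,\leq,d-1)$) is the finiteness of $R_p(a,b)$, which your construction certainly establishes.
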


\begin{lemma}\label{lem:Rd-union-bound}
For any two sets $S_1\in \HH$ and $S_2 \in \G$, 
$$\mu_A(S_1\cup S_2) \leq R_d(\mu_A(S_1)+1,\mu_A(S_2)+1)-1$$
\end{lemma}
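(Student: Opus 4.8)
The plan is to mirror the proof of Lemma~\ref{lem:union-bound}, replacing the two-dimensional Erd\H{o}s--Szekeres argument by the hypergraph Ramsey theorem just stated. As in the planar case, let $\ell$ be the hyperplane bounding the open halfspace $S_1$, say $S_1 = \ell^-$, and set $S_2' = S_2\setminus \ell^-$, so that $S_1\cup S_2 = S_1\cup S_2'$ and the two pieces are separated by $\ell$. Let $A'$ be a maximum-cardinality subset of $A$ with $V(A')\subseteq S_1\cup S_2$; put $m = |A'| = \mu_A(S_1\cup S_2)$. Our goal is to show $m \leq R_d(\mu_A(S_1)+1,\mu_A(S_2)+1)-1$, equivalently that $m \geq R_d(\mu_A(S_1)+1,\mu_A(S_2)+1)$ is impossible.

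First I would define a $2$-colouring of the $d$-subsets of $A'$. Every $d$ hyperplanes of $A'$ meet in a vertex (after a generic perturbation, or treating degenerate vertices appropriately), and that vertex lies either in $S_1 = \ell^-$ or not; colour a $d$-tuple with colour $1$ if its vertex lies in $S_1$ and colour $2$ otherwise — in the latter case the vertex lies in $\overline{\ell^-}$, and since it is also in $S_1\cup S_2$, it lies in $S_2' \subseteq S_2$. Now suppose $m \geq R_d(\mu_A(S_1)+1,\mu_A(S_2)+1)$. By the bichromatic Ramsey theorem there is either a subset $A_1\subseteq A'$ of size $\mu_A(S_1)+1$ all of whose $d$-tuples have colour $1$, or a subset $A_2\subseteq A'$ of size $\mu_A(S_2)+1$ all of whose $d$-tuples have colour $2$. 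In the first case every vertex of the arrangement $A_1$ lies in $S_1$, so $V(A_1)\subseteq S_1$ and $|A_1|\leq \mu_A(S_1)$, contradicting $|A_1| = \mu_A(S_1)+1$. In the second case every vertex of $A_2$ lies in $S_2'\subseteq S_2$, so $|A_2|\leq \mu_A(S_2') \leq \mu_A(S_2)$, again a contradiction. Hence $m < R_d(\mu_A(S_1)+1,\mu_A(S_2)+1)$, which is the claimed bound.

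The main obstacle, and the point needing the most care, is the handling of degenerate vertices: the definition of $\mu_A$ counts hyperplanes whose \emph{all} pairwise (here, all $d$-wise) intersections lie in a region, and vertices of $V(A)$ are intersection points of any $d$ hyperplanes, but in a non-generic arrangement more than $d$ hyperplanes may pass through a common point, and $d$ hyperplanes in special position may fail to meet in a single point. I would address this exactly as the surrounding text does — by the abuse of language fixing $V(A')$ as the set of all points that are intersections of some $d$ hyperplanes of $A'$, and observing that a $d$-subset whose hyperplanes do not meet in a proper vertex can be assigned either colour without harm, or by invoking a generic perturbation that does not decrease any of the quantities $\mu_A(S_1)$, $\mu_A(S_2)$, $\mu_A(S_1\cup S_2)$ in the needed direction. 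The only other point to check is the boundary convention: $S_1 = \ell^-$ is open, so vertices on $\ell$ itself are not in $S_1$ and get colour $2$, and they do belong to $S_2' = S_2\setminus\ell^-$ precisely when they are in $S_2$; since $V(A_2)$ must lie in $S_1\cup S_2$ to begin with, this is consistent, so the argument goes through verbatim as in the planar proof with Ramsey's theorem in the role of Erd\H{o}s--Szekeres.
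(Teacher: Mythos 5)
Your proposal is correct and is essentially the paper's own proof: the same $2$-colouring of $d$-tuples by which side of the bounding hyperplane their common vertex lies on, followed by the bichromatic hypergraph Ramsey theorem, with the contradiction phrased contrapositively rather than directly. Your added remarks on degenerate $d$-tuples and the open/closed boundary convention are sound points of care that the paper's proof passes over silently, but they do not change the argument.
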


\begin{proof}
Let $h$ be a hyperplane defining two open halfplanes $h^+$ and $h^-$
such that  $S_1 = h^-$ and let $S_2' = S_2\setminus h^-$.
Let $A'$ be a maximum cardinality subset of $A$ such that
$V(A)\subseteq S_1\cup S_2$. 
Colour every subset of $d$ hyperplanes in $A'$ with colour $1$ if their
intersection point is in $h^-$ and with colour $2$ otherwise. 

For any set $A_1\subseteq A'$ such that all subsets in 
$A_1 \choose d$ have colour $1$, notice that 
$V(A_1)\subseteq S_1$. Likewise, for any set
$A_2\subseteq A'$ such that all subsets in 
$A_2 \choose d$ have colour $2$, we  
have $V(A_2)\subseteq S_2'$. Any such set $A_1$ is of size
$|A_1|\leq\mu_A(S_1)$ and any such set $A_2$ is of size
$|A_2|\leq\mu_A(S_2')\leq \mu_A(S_2)$. 

Therefore, $A'$ has no subset of size $\mu_A(S_1)+1$ that has all
$d$-tuples of colour 1, and no subset of size
$\mu_A(S_2)+1$ that has all $d$-tuples of colour 2, and so by 
Ramsey's Theorem, 
$\mu_A(S_1\cup S_2) = |A'| \leq R_d(\mu_A(S_1)+1,\mu_A(S_2)+1)-1$.
\end{proof}

Define the operator $\oplus$ as $a \oplus b = R_d(a+1,b+1)-1$. The
operator is increasing and closed on the set $\N_{\geq d-1}$ of naturals $\geq d-1$. 
Since $R_d(d,x)=x$ for all $x$, $d-1$ is a neutral element. Therefore 
$(\N_{\geq d-1},\oplus,\leq,d-1)$ is a  totally ordered unital magma.
Thus we have:
\begin{corollary}
The set function $\mu_A$ takes values in the totally ordered unital magma
$(\N_{\geq d-1},\oplus,\leq,d-1)$; it is monotone and $\HH$-subadditive. 
\end{corollary}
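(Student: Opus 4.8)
The plan is to verify the three claims in turn. Recall $\mu_A(S)=\max\{\,|A'| : A'\subseteq A,\ V(A')\subseteq S\,\}$. The first two claims fall straight out of this definition, and the $\HH$-subadditivity claim is a short induction built on \lemref{Rd-union-bound}.

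First I would check that $\mu_A$ is $\N_{\ge d-1}$-valued. Every feasible $A'$ has $|A'|\le|A|=n$, so $\mu_A$ is finite. For the lower bound, let $A'$ consist of any $d-1$ hyperplanes of $A$ (possible since $n\ge d-1$); then $A'$ has no $d$-element subset, so $V(A')=\emptyset\subseteq S$ and hence $\mu_A(S)\ge d-1$. Thus $\mu_A(S)\in\{d-1,\dots,n\}\subseteq\N_{\ge d-1}$, so in particular every argument passed to $\oplus$ below lies in the set on which $\oplus$ is closed. Monotonicity is then one line: if $S\subseteq T$, every $A'$ feasible for $S$ is also feasible for $T$, hence $\mu_A(S)\le\mu_A(T)$.

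For $\HH$-subadditivity I would induct on $|H|$; the case $|H|=1$ is an equality. For $|H|\ge 2$, fix some $h_1\in H$, set $H'=H\setminus\{h_1\}$ and $U'=\bigcup_{h\in H'}h$. Since $\G$ contains $\HH$ and is closed under unions, $U'\in\G$, so \lemref{Rd-union-bound} applies with $S_1=h_1$ and $S_2=U'$ and yields $\mu_A(\bigcup_{h\in H}h)\le R_d(\mu_A(h_1)+1,\mu_A(U')+1)-1=\mu_A(h_1)\oplus\mu_A(U')$. The induction hypothesis bounds $\mu_A(U')$ by the iterated operation $\bigoplus_{h\in H'}\mu_A(h)$, and because $\oplus$ is increasing in each argument this bound may be substituted on the right-hand side, giving $\mu_A(\bigcup_{h\in H}h)\le\bigoplus_{h\in H}\mu_A(h)$, which is the desired inequality.

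There is no serious obstacle; two points deserve a word of care. First, the partial unions $U'$ must remain inside the domain $\G$, which is precisely why $\G$ is assumed closed under unions. Second, only monotonicity of $\oplus$ — not associativity — is invoked, which is why the $\HH$-subadditive condition is stated with a fixed bracketing on its right-hand side; this matters because $\oplus=R_d(\cdot+1,\cdot+1)-1$ need not be associative. The statement is the $d$-dimensional counterpart of the planar corollary above, where $R_2(a+1,b+1)-1=ab$ recovers the magma $(\R_0^+,\cdot,\le,1)$.
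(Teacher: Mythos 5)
Your proposal is correct and is exactly the argument the paper intends: the corollary is stated without proof as an immediate consequence of Lemma~\ref{lem:Rd-union-bound} together with the preceding observations that $\oplus$ is increasing, closed on $\N_{\geq d-1}$, and has neutral element $d-1$, and your induction on $|H|$ (using that $\G$ is union-closed so the partial unions stay in the domain) is the routine extension from two halfspaces to finitely many. Your remarks on the range $\{d-1,\dots,n\}$ and on needing only monotonicity rather than associativity of $\oplus$ are accurate and consistent with the paper's right-bracketed definition of iterated $\oplus$.
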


%Define $Q(x,c) := \max\{ y\in\N | \oplus^c y \leq x \}$.
Apply now the generalized center transversal theorem to obtain:
\begin{theorem}
Let $A_1, \ldots, A_k$ be $k$ sets of hyperplanes in $\R^d$. There
exists a $(k-1)$-flat $\pi$ such that for every open halfspace $h$ that
contains $\pi$, 
$$ (d-k+2)\mu_{A_i}(h) \geq |A_i|. $$  
\end{theorem}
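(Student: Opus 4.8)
The plan is to combine the generalized center transversal theorem (the second boxed theorem in Section~\ref{sec:cent-transv}) with the multiplicative-type union bound for $\mu_A$ that has just been established in the hyperplane setting (Lemma~\ref{lem:Rd-union-bound} and its Corollary). First I would recall that, by the Corollary, each set function $\mu_{A_i}$ is monotone and $\HH$-subadditive when we take values in the totally ordered unital magma $(\N_{\geq d-1},\oplus,\leq,d-1)$, where $a\oplus b = R_d(a+1,b+1)-1$. Moreover $\mu_{A_i}$ is concentrated on the bounded set $CH(A_i)$: if an open halfspace $h$ misses $CH(A_i)$ then no subset $A'\subseteq A_i$ has $V(A')\subseteq h$ except the ``trivial'' subsets of size at most $d-1$, so $\mu_{A_i}(h)=d-1=e$, the neutral element. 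Hence all hypotheses of the generalized center transversal theorem are met, and it yields a $(k-1)$-flat $\pi$ such that for every open halfspace $h\supseteq\pi$ we have $(d-k+2)\,\mu_{A_i}(h)\geq \mu_{A_i}(\R^d)=|A_i|$, where the multiplication $(d-k+2)\mu_{A_i}(h)$ is the $(d-k+2)$-fold $\oplus$-multiple as defined in the magma.

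The remaining work is purely to translate the magma inequality back into ordinary arithmetic, i.e.\ to show that the conclusion $(d-k+2)\mu_{A_i}(h)\geq |A_i|$ in the magma implies (or is exactly) the stated real inequality. Concretely I would unwind $cx := \oplus^c x$ for $c=d-k+2$: $\oplus^c x$ is the result of iterating $a\mapsto a\oplus x = R_d(a+1,x+1)-1$ starting from $x$. Since this quantity is some explicit (rapidly growing) function of $x$ depending only on $d$ and $k$, the inequality $\oplus^{d-k+2} \mu_{A_i}(h) \geq |A_i|$ says precisely that $\mu_{A_i}(h)$ is at least the value $Q(|A_i|)$ obtained by inverting that function. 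This is exactly the growing function $Q$ advertised in the introduction, and it is ``related to Ramsey numbers for hypergraphs'' as claimed. In the statement as written, the conclusion is literally just $(d-k+2)\mu_{A_i}(h)\ge |A_i|$ with the magma-multiplication, so strictly speaking no inversion is needed for the theorem itself — I would simply note that the left-hand side is this $\oplus$-iterate and that it is monotone increasing in $\mu_{A_i}(h)$, so it does lower-bound $\mu_{A_i}(h)$ by a function of $|A_i|$ that tends to infinity.

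A subtle point I would be careful about: the generalized center transversal theorem as stated requires the $\mu_i$ to be concentrated on bounded sets, and I should double-check the edge behavior — e.g.\ that $\mu_A(\emptyset)=d-1$ and $\mu_A(h)=d-1$ for halfspaces $h$ disjoint from $CH(A)$ are consistent with using $e=d-1$ as the ``zero'' in the definition of concentrated. This is immediate from the definition $\mu_A(S)=\max_{A'\subseteq A,\,V(A')\subseteq S}|A'|$, since any at-most-$(d-1)$-element family of hyperplanes in general position has no vertices at all (fewer than $d$ hyperplanes do not determine an intersection point), so $V(A')\subseteq S$ holds vacuously for such $A'$. Hence $\mu_A(S)\ge d-1$ always, with equality when $S$ contains no vertex of $A$; this is exactly what ``$\mu(h)=e$ whenever $h\cap X=\emptyset$'' demands.

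The main obstacle is not really a mathematical difficulty but a bookkeeping one: making sure the ``multiple'' $c\,x=\oplus^c x$ in the magma is correctly matched to the intended reading of the inequality, and being explicit that $\oplus$ being monotone (from $R_d$ being monotone in each argument) is what lets the generalized theorem apply verbatim. Once Lemma~\ref{lem:Rd-union-bound} is in hand, the proof is a one-line invocation of the machinery: apply the generalized center transversal theorem to $\mu_{A_1},\dots,\mu_{A_k}$, and read off the conclusion. I would keep the proof to essentially that sentence, plus the verification of concentration, plus the remark identifying the growth rate of the resulting bound with an iterated Ramsey function.
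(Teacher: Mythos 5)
Your proposal is correct and matches the paper's approach exactly: the paper proves this theorem by a one-line invocation of the generalized center transversal theorem, using the preceding corollary that $\mu_{A}$ is monotone and $\HH$-subadditive with values in the magma $(\N_{\geq d-1},\oplus,\leq,d-1)$, with the product $(d-k+2)\mu_{A_i}(h)$ read as the $\oplus$-multiple. Your additional verification of the concentration hypothesis and the remark on unwinding the iterated Ramsey bound are consistent with (and slightly more explicit than) what the paper writes.
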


The special case when $k=d$ gives a ham-sandwich cut theorem.

\begin{corollary}
Let $A_1, \ldots, A_d$ be $d$ sets of hyperplanes in $\R^d$. 
There exists a hyperplane $\pi$ bounding the two closed halfspaces
$\pi^+$ and $\pi^-$ and sets $A_i^\sigma\subseteq A_i$,  $\sigma \in \{+,-\}$, such that
$V(A_i^\sigma)\subseteq \pi^\sigma$ and  $|A_i^\sigma| \oplus |A_i^\sigma|\geq |A_i|$.
\end{corollary}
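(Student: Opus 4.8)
The plan is to read this off as the $k=d$ instance of the Center Transversal Theorem for arrangements stated just above. Applying that theorem with $k=d$ to the $d$ set functions $\mu_{A_1},\dots,\mu_{A_d}$, the $(d-1)$-flat $\pi$ it produces is a hyperplane, bounding closed halfspaces $\pi^+$ and $\pi^-$; and since $d-k+2=2$, the conclusion $(d-k+2)\,\mu_{A_i}(h)\geq|A_i|$ becomes $\mu_{A_i}(h)\oplus\mu_{A_i}(h)\geq|A_i|$ (using that the $2$nd multiple in the magma $(\N_{\geq d-1},\oplus,\leq,d-1)$ is $x\mapsto x\oplus x$), for every open halfspace $h\in\HH$ containing $\pi$.

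The one point that needs care is that the theorem speaks of open halfspaces that \emph{contain} $\pi$, whereas the corollary wants the two closed halfspaces that $\pi$ \emph{bounds}. When $\pi$ is a hyperplane, writing $\pi=\{x:\langle c,x\rangle=e\}$, the open halfspaces containing $\pi$ are exactly the halfspaces $\{x:\langle c,x\rangle<e'\}$ with $e'>e$ and $\{x:\langle c,x\rangle>e''\}$ with $e''<e$. So for $\sigma\in\{+,-\}$ and $\epsilon>0$ I would let $h^\sigma_\epsilon$ be the open halfspace obtained from $\pi^\sigma$ by pushing its bounding hyperplane a distance $\epsilon$ to the $(-\sigma)$-side, so that $\pi\subset h^\sigma_\epsilon$ and the displayed inequality applies to $h^\sigma_\epsilon$. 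Then I would use finiteness of $V(A_i)$: every vertex of $A_i$ lying strictly on the $(-\sigma)$-side of $\pi$ is at positive distance from $\pi$, so taking $\epsilon$ smaller than the minimum of these finitely many distances (over all $i$ and both signs) gives $h^\sigma_\epsilon\cap V(A_i)=\pi^\sigma\cap V(A_i)$; hence a subfamily $A'\subseteq A_i$ satisfies $V(A')\subseteq h^\sigma_\epsilon$ if and only if $V(A')\subseteq\pi^\sigma$, i.e.\ $\mu_{A_i}(h^\sigma_\epsilon)=\mu_{A_i}(\pi^\sigma)$.

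To finish, for each $i$ and $\sigma$ I would take $A_i^\sigma\subseteq A_i$ realizing the maximum in the definition of $\mu_{A_i}(\pi^\sigma)$, so that $V(A_i^\sigma)\subseteq\pi^\sigma$ and $|A_i^\sigma|=\mu_{A_i}(\pi^\sigma)=\mu_{A_i}(h^\sigma_\epsilon)$, and substitute into $\mu_{A_i}(h^\sigma_\epsilon)\oplus\mu_{A_i}(h^\sigma_\epsilon)\geq|A_i|$ to obtain $|A_i^\sigma|\oplus|A_i^\sigma|\geq|A_i|$, as claimed. The only real obstacle is this open-versus-closed bookkeeping, and it dissolves because the arrangement has only finitely many vertices; no new geometric idea is needed beyond specializing the theorem.
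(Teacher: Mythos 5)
Your proposal is correct and matches the paper's (implicit) argument: the paper simply states this corollary as the $k=d$ specialization of the preceding theorem, with $d-k+2=2$ so that the second multiple in the magma is $x\mapsto x\oplus x$. Your careful handling of the open-versus-closed halfspace issue via the finiteness of $V(A_i)$ is exactly the right way to fill in the detail the paper leaves unstated.
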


If the arrangement $A$ has the property that no $r+1$ hyperplanes intersect
in a common point, $\mu_A(\pi) \leq r$ for any hyperplane $\pi$, and
so by Lemma~\ref{lem:Rd-union-bound}, if $h$ is an open halfspace
bounded by $\pi$ and $\bar{h} = \pi\cup h$ is the corresponding closed
halfspace, $\mu_A(\bar{h}) \leq \mu_A(h) \oplus r$.

\begin{corollary}\label{cor:hamsw-genpos}
Let $A_1, \ldots, A_d$ be $d$ sets of hyperplanes in $\R^d$, no $r+1$ of
which intersect in a common point. 
There exists a hyperplane $\pi$ bounding the two open halfspaces
$\pi^+$ and $\pi^-$ and sets $A_i^\sigma\subseteq A_i$,  $\sigma \in \{+,-\}$, such that
$V(A_i^\sigma)\subseteq \pi^\sigma$ and  
$(|A_i^\sigma|\oplus |A_i^\sigma|) \oplus r \geq |A_i|$.
\end{corollary}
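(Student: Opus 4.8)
The plan is to bootstrap from the ham-sandwich cut corollary proved immediately above (the one producing the two \emph{closed} halfspaces), and then replace each closed side by the corresponding open side, paying for this trade with a single Ramsey-type correction $\oplus\, r$ supplied by Lemma~\ref{lem:Rd-union-bound} together with the general-position bound $\mu_A(\pi)\le r$ recorded in the paragraph preceding the statement.

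Concretely, I would first apply that preceding corollary to $A_1,\dots,A_d$ to obtain a hyperplane $\pi$; write $\pi^+,\pi^-$ for the two open halfspaces it bounds and $\overline{\pi^\sigma}=\pi^\sigma\cup\pi$ for the two closed ones. The corollary furnishes subsets $B_i^\sigma\subseteq A_i$ with $V(B_i^\sigma)\subseteq\overline{\pi^\sigma}$ and $|B_i^\sigma|\oplus|B_i^\sigma|\ge|A_i|$ for every $i$ and every $\sigma\in\{+,-\}$. Fixing $i$ and $\sigma$, I then let $A_i^\sigma\subseteq B_i^\sigma$ be a maximum-cardinality subset whose vertices all lie in the open side, so that $V(A_i^\sigma)\subseteq\pi^\sigma$ (as the statement demands) and $|A_i^\sigma|=\mu_{B_i^\sigma}(\pi^\sigma)$. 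Since $B_i^\sigma\subseteq A_i$ again has no $r+1$ concurrent hyperplanes, the observation preceding the statement gives $\mu_{B_i^\sigma}(\pi)\le r$, and Lemma~\ref{lem:Rd-union-bound} applied to the arrangement $B_i^\sigma$ with $S_1=\pi^\sigma\in\HH$ and $S_2=\pi$ (which we may take to lie in $\G$) yields
$$|B_i^\sigma|=\mu_{B_i^\sigma}(\overline{\pi^\sigma})=\mu_{B_i^\sigma}(\pi^\sigma\cup\pi)\le\mu_{B_i^\sigma}(\pi^\sigma)\oplus\mu_{B_i^\sigma}(\pi)\le|A_i^\sigma|\oplus r,$$
using that $\oplus$ is monotone in its second argument. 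Substituting this into $|A_i|\le|B_i^\sigma|\oplus|B_i^\sigma|$ is then all that is left.

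The step I expect to be the real obstacle is exactly that final substitution: composing the two inequalities naively gives only $(|A_i^\sigma|\oplus r)\oplus(|A_i^\sigma|\oplus r)\ge|A_i|$, which is a priori weaker than the asserted $(|A_i^\sigma|\oplus|A_i^\sigma|)\oplus r\ge|A_i|$, because each of the two copies of $\mu_{B_i^\sigma}(\overline{\pi^\sigma})$ appearing in the doubling picks up its own $\oplus r$. To place the correction \emph{outside} the doubling one must incur the closed-to-open passage only once, which means not treating the ham-sandwich corollary as a black box but instead revisiting the proof of the center transversal theorem for arrangements in the case $k=d$: that proof forces a closed halfspace at a single point, and applying Lemma~\ref{lem:Rd-union-bound} to the full arrangement $A_i$ precisely there --- bounding $\mu_{A_i}(\overline{\pi^\sigma})\le\mu_{A_i}(\pi^\sigma)\oplus r$ \emph{before} the Ramsey-type doubling $x\mapsto x\oplus x$ is invoked --- keeps the $\oplus r$ term outside it. Choosing $A_i^\sigma$ to realise $\mu_{A_i}(\pi^\sigma)$ then delivers $(|A_i^\sigma|\oplus|A_i^\sigma|)\oplus r\ge|A_i|$ with $V(A_i^\sigma)\subseteq\pi^\sigma$, and the remaining checks (inheritance by $A_i^\sigma$ of the hypothesis that no $r+1$ hyperplanes are concurrent, the choice of $\G$, and monotonicity of $\oplus$) are the routine bookkeeping already carried out above.
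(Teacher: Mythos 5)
Your first derivation is exactly the paper's (implicit) proof: the paper offers nothing beyond the observation preceding the statement, namely that $\mu_A(\pi)\le r$ under the general-position hypothesis and hence $\mu_A(\bar h)\le\mu_A(h)\oplus r$ by Lemma~\ref{lem:Rd-union-bound}, to be combined with the closed-halfspace ham-sandwich corollary just above. Your execution of that route is correct, and the obstacle you flag is a genuine one, not an artefact of your bookkeeping: since $\oplus$ is not associative, $(x\oplus r)\oplus(x\oplus r)$ and $(x\oplus x)\oplus r$ are different quantities, and in the planar case, where $\oplus$ is ordinary multiplication, they are $x^2r^2$ versus $x^2r$, so the stated form is \emph{strictly stronger} (it asserts $x\ge\sqrt{|A_i|/r}$ rather than $x\ge\sqrt{|A_i|}/r$). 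For the larger of the two sides one can recover the stated form via $|A_i|\le\mu_{A_i}(\pi^+)\oplus\bigl(\mu_{A_i}(\pi^-)\oplus r\bigr)$ and monotonicity, but for the smaller side the only available input is the two-sided guarantee $\mu_{A_i}(\overline{\pi^\sigma})\oplus\mu_{A_i}(\overline{\pi^\sigma})\ge|A_i|$, and peeling $\pi$ off a closed halfspace then necessarily corrects \emph{each} of the two occurrences of $\mu$ in the doubled expression.

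For that reason your proposed repair does not work as described. In the proof of the center transversal theorem the doubling $x\mapsto x\oplus x$ is applied to $\mu_{A_i}(h)$ for open halfspaces $h$ strictly containing $\pi$; passing from such $h$ (or from its limit $\overline{\pi^\sigma}$) down to the open halfspace $\pi^\sigma$ is a single substitution $\mu(\overline{\pi^\sigma})\le\mu(\pi^\sigma)\oplus r$ performed on a quantity that already appears twice in $\mu\oplus\mu$, so there is no stage of Dol'nikov's argument at which the $\oplus\,r$ can be hoisted outside the doubling for both sides simultaneously. The honest conclusion of the paper's own route is
\[
\bigl(|A_i^\sigma|\oplus r\bigr)\oplus\bigl(|A_i^\sigma|\oplus r\bigr)\;\ge\;|A_i|
\qquad\text{for both }\sigma\in\{+,-\},
\]
and the corollary's parenthesization should be regarded as an imprecision (the paper is already cavalier about associativity, writing $a\oplus b\oplus r$ without parentheses in the proof of Lemma~\ref{lem:same-type-lemma}). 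This is harmless downstream: the same-type lemma only needs $|A_i^\sigma|\ge g(|A_i|)$ for some growing function $g$, which either form supplies. So you should state and prove the corollary with the bound displayed above and not spend further effort trying to reach the literal $(x\oplus x)\oplus r$; everything else in your write-up (the inheritance of the concurrency hypothesis by subsets, taking $\G$ large enough to contain $\pi$, and the monotonicity of $\oplus$) is fine.
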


\section{Same-type lemma for arrangements} \label{sec:same-type-lemma}
Center transversal theorems, and especially the ham-sandwich cut
theorem, are basic tools for proving many facts in discrete
geometry. We show here how the same facts can be shown for hyperplane
arrangements in $\R^d$. 

A transversal of a collection of sets $X_1,\ldots,X_m$ is a $m$-tuple
$(x_1,\ldots,x_m)$ where $x_i\in X_i$. 
A collection of sets $X_1,\ldots,X_m$ has \emph{same-type
  transversals} if all of its transversals have the same
order-type.

Note that $m\geq d+1$ sets have same-type transversals
if and only if every $d+1$ of them have same-type transversals. 
There are several equivalent definitions for these notions.
\begin{enumerate}
\item The sets  $X_1,\ldots,X_{d+1}$ have same-type transversals if and only if
  they are \emph{well separated}, that is, if and only if for all disjoint sets of
  indices $I,J\subseteq\{1,\ldots,d+1\}$, there is a hyperplane
  separating the sets $X_i, i\in I$ from the sets $X_j, j\in J$.
\item Connected sets $C_1,\ldots,C_{d+1}$ have same-type transversals if and only if
  there is no hyperplane intersecting simultaneously all $C_i$. Sets
  $X_1,\ldots,X_{d+1}$ have same-type transversals if and only if there is no
  hyperplane intersecting simultaneously all their convex hulls $C_i=CH(X_i)$.
\end{enumerate}

The same-type lemma for point sets states that there is a constant
$c=c(m,d)$ such that for any collection
$S_1,\ldots,S_m$ of finite point sets in $\R^d$, there are sets 
$S'_i \subseteq S_i$ such that $|S'_i|\geq c|S_i|$ and the sets
$S'_1,\ldots,S'_m$ have same-type transversals.  We here show a
similar result for hyperplane arrangements.

A function $f$ is \emph{growing} if for any value $y_0$ there is a $x_0$
such that $f(x)\geq y_0$ for any $x\geq x_0$.

\begin{lemma}\label{lem:same-type-lemma}
For any integers $d$, $m$, and $r$, there is a growing function $f=f_{m,d,r}$ such that
for any collection of $m$ hyperplane arrangements $A_1,\ldots,A_m$,
in $\R^d$, where no $r+1$ hyperplanes
intersect at a common point, there are sets $A'_i\subseteq A_i$ such
that $|A'_i|\geq f(|A_i|)$ and the sets $CH(A'_1),\ldots,CH(A'_m)$
have same-type transversals.
\end{lemma}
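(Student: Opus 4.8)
The strategy is the standard induction-on-$m$ argument used to prove the same-type lemma for point sets, but with the ham-sandwich cut for hyperplane arrangements (Corollary~\ref{cor:hamsw-genpos}) playing the role of the ham-sandwich cut for measures. Recall that $m$ sets have same-type transversals if and only if every $d+1$ of them do (equivalently, no hyperplane meets all their convex hulls), so by a routine product/iteration argument over all $\binom{m}{d+1}$ subfamilies it suffices to treat the case $m=d+1$: if for each $(d+1)$-subfamily we can pass to subsets of size at least $g(|A_i|)$ that are well separated, then composing these finitely many passes yields the growing function $f$ for general $m$. So fix $m=d+1$.

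\textbf{The $m=d+1$ case.}
Given arrangements $A_1,\ldots,A_{d+1}$ in $\R^d$ with no $r+1$ hyperplanes concurrent, apply Corollary~\ref{cor:hamsw-genpos} to the $d$ arrangements $A_1,\ldots,A_d$: this gives a hyperplane $\pi$ with open sides $\pi^+,\pi^-$ and subsets $B_i^\sigma\subseteq A_i$ with $V(B_i^\sigma)\subseteq\pi^\sigma$ and $(|B_i^\sigma|\oplus|B_i^\sigma|)\oplus r\ge|A_i|$ for each $i\le d$ and $\sigma\in\{+,-\}$. Since the operator $a\mapsto (a\oplus a)\oplus r$ is increasing and defined on $\N_{\ge d-1}$, this bound forces $|B_i^\sigma|$ to be large when $|A_i|$ is large; more precisely, writing $\phi(a)=(a\oplus a)\oplus r$, we get $|B_i^\sigma|\ge\phi^{-1}(|A_i|)$, and $\phi^{-1}$ is a growing function (it is nondecreasing and unbounded, since $\oplus$ is closed and increasing on $\N_{\ge d-1}$ and $R_d(x,x)$ grows without bound in $x$). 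Now split $A_{d+1}$ by $\pi$ into its two parts and keep the larger one, say with vertices on side $\tau$; it has at least $(|A_{d+1}|-c)/2$ hyperplanes for a constant $c=c(d,r)$ absorbing the hyperplanes whose $d$-wise intersections straddle $\pi$ (bounded because of the no-$(r+1)$-concurrent assumption, via Lemma~\ref{lem:Rd-union-bound}). Choose for each $i\le d$ the side $\sigma=\tau$ if we want $A_{d+1}$'s part separated from all the others, or more simply: for the $(d+1)$ sets we now have $A_{d+1}'$ on side $\tau$ and we want, for every partition of the indices, a separating hyperplane. The point of the ham-sandwich step is that $\pi$ already separates index $d+1$ from $\{1,\ldots,d\}$ on whichever sides we select; we then recurse inside each of the two open halfspaces $\pi^+$ and $\pi^-$ on the $d$ arrangements $B_1^\sigma,\ldots,B_d^\sigma$ living there (now one dimension's worth of separating constraints has been handled) to obtain all remaining separations among $\{1,\ldots,d\}$, intersecting the resulting subsets. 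Iterating this peeling — each ham-sandwich cut resolves the separation of one distinguished set from the rest, and the leftover separations are handled recursively — yields, after finitely many rounds, subsets $A_i'\subseteq A_i$ that are pairwise separable in the strong sense that every bipartition of $\{1,\ldots,d+1\}$ is realized by a hyperplane, i.e.\ $CH(A_1'),\ldots,CH(A_{d+1}')$ have same-type transversals. Each round multiplies the ``loss'' function by composing with a fixed growing function $\phi^{-1}$ (and subtracts a constant, then halves), so the overall bound $f(|A_i|)$ obtained is a finite composition of growing functions and affine shrinkings, hence still growing.

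\textbf{Main obstacle.}
The delicate point is bookkeeping the separation requirements: the ham-sandwich cut for arrangements (Corollary~\ref{cor:hamsw-genpos}) produces a single hyperplane that bisects $d$ arrangements, which is exactly enough to separate one designated arrangement from the other $d$ but does not by itself give \emph{all} $2^{d}/2-1$ bipartitions of a $(d+1)$-family. So one must carefully set up the recursion so that after peeling off the separation ``$\{i\}$ vs rest'' for $i=d+1$, the recursive call inside each halfspace of $\pi$ on the remaining $d$ arrangements supplies the separations of every bipartition of $\{1,\ldots,d\}$, and that these, together with $\pi$, generate every bipartition of $\{1,\ldots,d+1\}$ (which they do, since any bipartition either puts $d+1$ alone on one side — handled by $\pi$ — or splits $\{1,\ldots,d\}$ nontrivially on the side containing $d+1$, handled recursively, after noting a bipartition separating $\{d+1\}\cup I$ from $J$ with $I,J\ne\emptyset$ can be obtained by combining the hyperplane for $I$ vs $J$ with a translate, since all of $I,J$ lie on one fixed side of $\pi$). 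Verifying this generation step, and checking that the arrangement version of ``a hyperplane meeting all convex hulls'' is ruled out precisely when all bipartitions are separable (the equivalence quoted before the lemma, applied to the convex hulls $CH(A_i')$), is where the real work lies; the quantitative part is then automatic because every operation we perform on the sizes is a growing function or a shift-and-halve, and a finite composition of such is growing. Also worth a remark: the no-$(r+1)$-concurrent hypothesis is used only to keep $\mu$ close to additive across a hyperplane via Lemma~\ref{lem:Rd-union-bound}, ensuring the ``straddling'' hyperplanes form a bounded set and the halving step loses only an additive constant.
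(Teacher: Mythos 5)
Your reduction to the case $m=d+1$ and your use of Corollary~\ref{cor:hamsw-genpos} match the paper, but the core of your argument --- the claim that the single cut $\pi$ separating $\{d+1\}$ from $\{1,\ldots,d\}$, together with recursively obtained separations among $\{1,\ldots,d\}$, \emph{generates} every bipartition of $\{1,\ldots,d+1\}$ --- is false, and this is exactly the step you flag as ``where the real work lies.'' Knowing that $CH(A'_{d+1})$ is separated from $\bigcup_{i\le d}CH(A'_i)$ by $\pi$, and that $\bigcup_{i\in I}CH(A'_i)$ is separated from $\bigcup_{j\in J}CH(A'_j)$ by some other hyperplane, does \emph{not} imply that $CH(A'_{d+1})\cup\bigcup_{i\in I}CH(A'_i)$ can be separated from $\bigcup_{j\in J}CH(A'_j)$. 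Already in the plane: take $X=\{(0,1)\}$, let $Y$ be the segment from $(1,1)$ to $(1,\tfrac12)$, and let $Z$ be the segment from $(-10,-1)$ to $(10,-1)$. Then $X$ and $Y$ are separated by the line $x=\tfrac12$, and $Z$ is separated from $X\cup Y$ by the $x$-axis, yet $CH(Z\cup X)$ is a triangle containing the point $(1,\tfrac12)\in Y$, so no line separates $Z\cup X$ from $Y$. ``Combining the hyperplane for $I$ vs $J$ with a translate'' cannot repair this. A secondary problem: your instruction to recurse in \emph{both} halfspaces $\pi^+$ and $\pi^-$ and then intersect the resulting subsets is incoherent as stated, since $B_i^+$ and $B_i^-$ are essentially disjoint subsets of $A_i$.

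The paper avoids the generation issue entirely: it loops over the at most $2^d$ bipartitions $I$ versus $\{1,\ldots,d+1\}\setminus I$ (normalized so that $d+1\in I$), and for each bipartition not yet separated it performs a \emph{fresh} ham-sandwich cut of the current $A_1,\ldots,A_d$, labels as $\pi^+$ the side on which $A_{d+1}$ retains the larger portion, and then assigns each $A_i$ to the $+$ or $-$ side according to whether $i\in I$; this directly realizes the separating hyperplane for that particular $I$, with no need to combine previously found hyperplanes. Separations already achieved survive passage to subsets, so after $2^d$ rounds all bipartitions are realized (hence, by the well-separated characterization, the convex hulls have same-type transversals), and the size bound is the $2^d$-fold composition of the growing function $g(x)=\min\{y\mid y\oplus y\oplus r\ge x\}$. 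Your quantitative bookkeeping (finite compositions of growing functions remain growing) is fine; it is the combinatorial recursion that must be replaced by this per-bipartition cutting scheme.
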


\begin{proof}
The proof will follow closely the structure of Matou{\v s}ek
\cite[Theorem~9.3.1, p.217]{Matousek-2002-LDG}.
First notice that the composition of two growing functions is a
growing function. The proof will show how to choose successive
(nested) subsets of each set $A_i$, $c$ times where $c=c(m,d)$ only
depends on $m$ and $d$ and where the size of each subset is some
growing function of the previous one. 

Also, it will suffice to prove the theorem for $m = d+1$, and then
apply it repeatedly for each $d+1$ tuple of sets. The resulting
function $f_{m,d,r}$ will be $f_{d+1,d,r}^{m \choose {d+1}}$, the repeated
composition of $f_{d+1,d,r}$, ${m \choose {d+1}}$ times.

So, given $d+1$ sets $A_1,\ldots,A_{d+1}$ of hyperplanes in $\R^d$, 
suppose that there is an index set $I\subseteq\{1,\ldots,d+1\}$ such
that $\cup_{i\in I}\ch{A_i}$ and $\cup_{i\notin I}\ch{A_i}$ are not
separable by a hyperplane and assume without loss of generality that
$d+1\in I$.
Let $\pi$ be the ham sandwich cut hyperplane for
arrangements $A_1,\ldots,A_d$ obtained by applying
Corollary~\ref{cor:hamsw-genpos}. 
Then for each $i\in[1,d]$, each of the two open halfspaces $\pi^\sigma$,
$\sigma\in\{+,-\}$ bounded by $\pi$ contains a subset 
$A^\sigma_i\subseteq A_i$ such that $V(A_i^\sigma)\subseteq \pi^\sigma$ and  
$(|A_i^\sigma|\oplus |A_i^\sigma|) \oplus r \geq |A_i|$.
Furthermore, because $\mu_{A_{d+1}}(\pi)\leq r$ and by Lemma~\ref{lem:Rd-union-bound}, 
$$\mu_{A_{d+1}}(\pi^+)\oplus\mu_{A_{d+1}}(\pi^-)\oplus r
\geq\mu_{A_{d+1}}(\R^d) = |A_{d+1}|.$$

Assume without loss of generality 
$\mu_{A_{d+1}}(\pi^+)\geq \mu_{A_{d+1}}(\pi^-)$. Then 
$\mu_{A_{d+1}}(\pi^+)\oplus\mu_{A_{d+1}}(\pi^+)\oplus r \geq |A_{d+1}|$.
For each $i\in I$, let $A'_i = A^+_i$ and for each $i\notin I$, let 
$A'_i=A^-_i$. Let $g(x)=\min\{y|y\oplus y\oplus r \geq x\}$. Then $g$
is a growing function, and $|A'_i|\geq g(|A_i|)$.

In the  worst case, we have to shrink the sets for each possible $I$, $2^d$
times. Therefore for
$m=d+1$, the function $f$ in the statement of the theorem is a
composition of $g$, $2^d$ times, and is a growing function.
\end{proof}

In the plane, the same-type lemma readily gives a positive portion
Erd\H{o}s-Szekeres Theorem. Recall that the Erd\H{o}s-Szekeres (happy ending) theorem \cite{es-cpg-35}
states that for any $k$ there is a number $ES(k)$ such that any set of
$ES(k)$ points in general position in $\R^2$ contains a subset of size
$k$ which is in convex position.

\begin{theorem}\thmlabel{arrang}
For every integers $k$, $r$, and $c$, there is an integer $N$ such that
any arrangement $A$ of $N$ lines, such that no $r+1$ lines go through a
common point,
contains disjoint subsets $A_1,\ldots,A_k$ with $|A_i|\geq c$ and such
that every transversal of $\ch{A_1},\ldots,\ch{A_k}$ is in convex position.
\end{theorem}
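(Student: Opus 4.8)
The plan is to derive this from the same-type lemma for arrangements (Lemma~\ref{lem:same-type-lemma}) exactly the way the positive-fraction Erd\H{o}s--Szekeres theorem for point sets is derived from the point-set same-type lemma: the same-type lemma supplies many sets all of whose transversals look alike, and the Erd\H{o}s--Szekeres happy-ending theorem is the combinatorial engine that extracts $k$ of them in convex position.

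First I would fix $m = ES(k)$ and let $f = f_{m,2,r}$ be the growing function provided by Lemma~\ref{lem:same-type-lemma}. Given an arrangement $A$ of $N$ lines with no $r+1$ concurrent, I would partition $A$ arbitrarily into $m$ subarrangements $B_1,\ldots,B_m$ of nearly equal sizes $\lfloor N/m\rfloor$ or $\lceil N/m\rceil$; each $B_i$ inherits the property that no $r+1$ of its lines are concurrent. Applying Lemma~\ref{lem:same-type-lemma} to $B_1,\ldots,B_m$ yields subsets $A'_i\subseteq B_i$ with $|A'_i|\geq f(\lfloor N/m\rfloor)$ such that $\ch{A'_1},\ldots,\ch{A'_m}$ have same-type transversals. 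Since $f$ is growing, taking $N$ large enough makes $|A'_i|\geq c$ for every $i$, and it then remains only to select a $k$-element subfamily whose hulls have the required convexity property.

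The crucial point is that a transversal of well-separated convex bodies is automatically in general position. Indeed, choose any transversal $(x_1,\ldots,x_m)$ with $x_i\in\ch{A'_i}$; were three of its points collinear, the line through them would meet all three corresponding hulls, contradicting the fact that every three of $\ch{A'_1},\ldots,\ch{A'_m}$ have same-type transversals --- which, for connected sets, is equivalent to no line meeting all three. Hence $x_1,\ldots,x_m$ are $m=ES(k)$ points in general position in $\R^2$, so the happy-ending theorem yields indices $i_1<\cdots<i_k$ with $x_{i_1},\ldots,x_{i_k}$ in convex position. Putting $A_j:=A'_{i_j}$, the same-type property passes to $\ch{A_1},\ldots,\ch{A_k}$, so every transversal of these hulls has the same order type as $(x_{i_1},\ldots,x_{i_k})$ and is therefore in convex position; moreover well-separation makes the hulls pairwise disjoint, so the $A_j$ are disjoint subsets of $A$ with $|A_j|\geq c$, as required.

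I expect the only step that is not pure bookkeeping to be the general-position observation above: recognising that same-type transversals already preclude three collinear representatives, so that the happy-ending theorem applies verbatim with no perturbation argument. All the dependence of $N$ on $k$, $r$, $c$ is absorbed into the growing function $f_{m,2,r}$ of Lemma~\ref{lem:same-type-lemma}, so beyond the fact that $f$ is growing no explicit quantitative estimate is needed.
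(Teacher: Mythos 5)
Your proposal is correct and follows essentially the same route as the paper: set $m=ES(k)$, partition $A$ into $m$ parts, apply Lemma~\ref{lem:same-type-lemma}, pick one transversal, and invoke the happy-ending theorem. Your explicit justification that the transversal points are in general position (via the no-line-stabbing-three-hulls characterization of same-type transversals) is a small detail the paper leaves implicit, but the argument is otherwise identical.
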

\begin{proof}
Let $m=ES(k)$ and let $f=f_{m,2,r}$ be as in Lemma~\ref{lem:same-type-lemma}. 
Let $N$ be such that $f(\lfloor N/m\rfloor)\geq c$.
Partition the set $A$ of $N$ lines into $m$ sets $A_1,\ldots,A_m$ of
$N/m$ lines arbitrarily. 
Apply Lemma~\ref{lem:same-type-lemma} to obtain sets
$A'_1,\ldots,A'_m$ each of size at least $c$. Finally, choose one
transversal $(x_1,\ldots,x_m)$ from the sets $\ch{A'_i}$ and apply the
Erd\H{o}s-Szekeres theorem to obtain a subset $x_{i_1},\ldots,x_{i_k}$
of points in convex position. Because the sets $\ch{A'_i}$ have the
same type property, every transversal of
$\ch{A'_{i_1}},\ldots,\ch{A'_{i_k}}$ is in convex position.
\end{proof}

\pagebreak
\section{Graph Drawing}\label{sec:graph-drawing}

Formally, a vertex labelling of a graph $G=(V,E)$ is a bijection
$\pi:V\rightarrow [n]$. A set of $n$ lines in the plane labelled from
$1$ to $n$ \emph{supports $G$ with vertex labelling $\pi$} if there
exists a straight-line crossing-free drawing of $G$ where for each
$i\in[n]$, the vertex labelled $i$ in $G$ is mapped to a point on line $i$. A set $L$ of $n$ lines labelled from $1$ to $n$ \emph{supports} an $n$-vertex graph $G$ if for every vertex labelling $\pi$ of $G$, $L$ supports $G$ with vertex labelling $\pi$. 
In this context clearly it only makes sense to talk about planar
graphs. We are interested in the existence of an $n$-vertex line set that
supports all $n$-vertex planar graphs, that is, in the existence of
a \emph{universal} set of lines for planar graphs.

\begin{theorem}\thmlabel{gd-main}
For some absolute constant $c'$ and every $n\geq c'$, there exists no set of $n$ lines in the plane that support all $n$-vertex planar graphs.
\end{theorem}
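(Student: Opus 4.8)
The plan is to establish the theorem in the contrapositive form dictated by the definition of \emph{support}: since a line set $L$ \emph{supports} a graph only when \emph{every} labelling admits a drawing, to show that no $L$ is universal it suffices to exhibit, for each $L$, a single $n$-vertex planar graph $G$ together with a single labelling $\pi$ for which no straight-line crossing-free drawing places vertex $i$ on line $i$. (If the confinement step below can be made labelling-independent, the very same $G$ even fails for all of its labellings, giving the stronger unlabelled non-support; but one bad labelling already settles universality.) The mechanism I would use is that four points in convex position carrying a $K_4$ force a crossing, so the whole argument reduces to locating, inside any arrangement $L$, four groups of lines whose vertices are pinned into convex position.

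First I would extract that rigid substructure. I handle concurrency by a dichotomy on the maximum number $r$ of lines of $L$ through a single point: fixing a threshold $\rho$, if $r>\rho$ then a large pencil is present and a pencil argument in the spirit of Dujmovi\'c~\etal\cite{mi-GD10} produces a non-supported labelled graph, while if $r\le\rho$ I apply the positive portion Erd\H{o}s--Szekeres theorem for arrangements, \thmref{arrang}, with $k=4$ and a large constant $c$. Setting $c'$ to the larger of the two resulting thresholds, every $n\ge c'$ then yields four \emph{disjoint} subsets $A_1,\dots,A_4\subseteq L$, each of size at least $c$, whose convex hulls $\ch{A_1},\dots,\ch{A_4}$ are in convex position in the strong transversal sense: every choice $(x_1,\dots,x_4)$ with $x_i\in\ch{A_i}$ is in convex position. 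By \lemref{same-type-lemma} the four hulls are moreover separable, so no line of the plane meets three of them.

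Next I would build $G$ and $\pi$ so that a hypothetical drawing produces exactly such a forbidden transversal. Let $G$ consist of four connected ``blobs'' $B_1,\dots,B_4$, a distinguished representative vertex $x_i$ inside each $B_i$, the six edges of a $K_4$ joining $x_1,x_2,x_3,x_4$, and enough filler vertices and edges to reach $n$ vertices while keeping $G$ planar; the labelling $\pi$ maps the vertices of $B_i$ bijectively onto the lines of $A_i$. The intended contradiction is short: if each representative $x_i$ is forced to lie in its cluster hull $\ch{A_i}$, then $(x_1,x_2,x_3,x_4)$ is a transversal of the four hulls and is therefore in convex position by \thmref{arrang}; but the two diagonals of a $K_4$ on four points in convex position cross, so the drawing cannot be crossing-free, and hence $L$ does not support $G$ with labelling $\pi$. (Alternatively one may take $G$ to be a triangulation, whose outer face is a single triangle and whose drawing therefore has only three convex-hull vertices, and derive the contradiction from four separated blobs each contributing a hull vertex.)

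The hard part, and the only place where separability is genuinely used, is the \emph{confinement} of each representative $x_i$ to $\ch{A_i}$. The obstacle is that a line is infinite, so a priori the vertex placed on a line of $A_i$ can slide arbitrarily far from $\ch{A_i}$, leaving the convex-position region that \thmref{arrang} controls: the transversal statement governs points inside the hulls, not arbitrary points on the cluster lines. To defeat this escape I would route the boundary of each blob $B_i$ on the lines of $A_i$ and connect the four blobs through the $K_4$ edges, then argue that separability (no line, and hence no edge supported on a cluster line, meets three clusters) prevents one blob from stretching into or around another without creating a crossing among the inter-blob edges; this traps each $B_i$, and in particular its representative $x_i$, within the region delimited by $\ch{A_i}$. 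Turning this trapping intuition into a proof that holds for every admissible placement is the technical heart of the argument, and it is what fixes the admissible value of $c$ and thereby the absolute constant $c'$; once it is in place, the $K_4$ crossing above closes the proof.
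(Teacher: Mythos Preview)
Your outline has the right architecture---dichotomy on concurrency, then \thmref{arrang} to extract clusters in convex position, then a gadget graph whose labelling forces a crossing---but the step you yourself flag as ``the hard part'' is a genuine gap, not a detail. Nothing in your argument forces the representative $x_i$ (or any vertex of $B_i$) to lie in $\ch{A_i}$: each vertex sits on an unbounded line of $A_i$ and can slide arbitrarily far away, and neither separability of the hulls nor ``routing the boundary of $B_i$ on the lines of $A_i$'' blocks this, since every line of $A_i$ also leaves $\ch{A_i}$. Without confinement your $K_4$ crossing never fires. Your alternative triangulation idea has the same hole: you need each blob to contribute a point on the outer triangle, but you have no mechanism producing such a point.

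The paper closes exactly this gap with a cited gadget, \lemref{core}: the $6$-vertex triangulation $G_6$ admits a labelling such that in \emph{every} compliant drawing on six given lines, some point of an \emph{interior face} lies in $\ch{L}$. Crucially this pins a point of the drawing's region, not a vertex, so a $K_4$-on-representatives argument would still not work directly. The paper therefore uses $k=6$ clusters (not $4$) and a graph $G$ with a cut vertex $v$ whose removal leaves three triangulated components $A,B,C$; each component carries two copies of $G_6$ mapped to an antipodal pair of clusters ($A_1,A_4$; $A_2,A_5$; $A_3,A_6$). \lemref{core} then guarantees that the outer triangle $T_A$ contains points of $\ch{A_1}$ and $\ch{A_4}$, hence the whole segment $\overline{p_1p_4}$; likewise $T_B\supseteq\overline{p_2p_5}$. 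In any planar drawing of $G$ two of the three triangles are disjoint, yet the convex-position ordering of $A_1,\dots,A_6$ forces $\overline{p_1p_4}$ and $\overline{p_2p_5}$ to cross---the contradiction. So the missing ingredient is precisely \lemref{core}, and the combinatorics it enables (six clusters paired into three components, crossing \emph{segments} inside disjoint triangles) replaces your unproven vertex-confinement step.
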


%The following known results indicates that it is enough to focus on sets of lines no two of which are parallel.

%\begin{lemma}\cite{1531118}\lemlabel{parallel}
%For every $n\geq 6$, no set of $n$ parallel lines supports all $n$-vertex planar graphs.
%\end{lemma}

%To prove \thmref{gd-main} it is enough to prove the following claim.

%\begin{claim}\clmlabel{gd-main}
%For some absolute constant $c"$ and every $n\geq c"$, there exists no
%set of $n$ pairwise intersecting lines that support all $n$-vertex planar graphs.
%\end{claim}

%\begin{proof}[Proof of \thmref{gd-main}]
%Let $c'=6c"$. Let $L$ be any set of $n\geq c'$ lines.
%If $L$ has at least $6$ lines that are pairwise parallel, then \lemref{parallel} implies that $L$ cannot support all $n$-vertex planar graphs. Otherwise, $L$ has at least $n/6\geq c"$ pairwise intersecting lines. In that case, \clmref{gd-main} implies that $L$ cannot support all $n$-vertex planar graphs.
%\end{proof}

The following known result will be used in the proof of this theorem.

\begin{lemma}\cite{mi-GD10}\lemlabel{core}
Consider the planar triangulation on $6$ vertices, denoted by $G_6$,
that is depicted on the bottom of \figref{G}. $G_6$ has vertex labelling $\pi$ such that the following holds for every set
$L$ of $6$ lines labelled from $1$ to $6$, no two of which are parallel. For
every straight-line crossing-free drawing, $D$, of $G_6$ where for each
$i\in[n]$, the vertex labelled $i$ in $\pi$ is mapped to a point on line
$i$ in $L$, there is a point that is in an interior face of $D$
and in \ch{L}. 
\end{lemma}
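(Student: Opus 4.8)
The plan is to reduce \lemref{core} to a statement about the convex hull of the line arrangement and then to attack that statement with a separation argument. Fix $G_6$ and the labelling $\pi$ as in \figref{G}; I will only use that $G_6$ is a triangulation (maximal planar graph) on six vertices together with one clean structural feature, namely that $G_6$ is the octahedron, so that the three vertices off any facial triangle themselves span the antipodal facial triangle. Thus every straight-line crossing-free drawing $D$ (with vertex $i$ placed at a point $p_i$ on line $i$) has a triangular outer face, its three outer vertices spanning a facial triangle $\{a,b,c\}$, and the remaining three vertices $\{a',b',c'\}$ drawn as a triangle nested strictly inside. The first standard fact I invoke is that a straight-line drawing of a triangulation fills its outer triangle: the outer face is the only unbounded face, all other vertices lie strictly inside it, and the bounded (interior) faces partition its interior. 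Hence the convex hull $\ch{D}$ of $p_1,\dots,p_6$ equals the outer triangle, and the union of the interior faces of $D$ is exactly $\operatorname{int}(\ch{D})$.

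Consequently, producing a point that lies both in an interior face of $D$ and in $\ch{L}$ is essentially the same as showing $\ch{L}\cap\operatorname{int}(\ch{D})\neq\emptyset$: any point of this intersection that avoids the (finitely many, measure-zero) edges of $D$ lies in the relative interior of some interior face. Since the six lines are not concurrent the set $\ch{L}$ is two-dimensional, so a nonempty intersection with the open set $\operatorname{int}(\ch{D})$ automatically contains such a point; the degenerate concurrent case, where $\ch{L}$ is the single common point, is disposed of separately. It therefore suffices to prove $\ch{L}\cap\operatorname{int}(\ch{D})\neq\emptyset$ for every valid drawing.

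Suppose not. Then $\ch{L}$ and $\operatorname{int}(\ch{D})$ are disjoint convex sets, so there is a separating line $\ell^\ast$, which (having an open set of separators to choose from) I may take generic: not parallel to any line of $L$ and avoiding every vertex of the arrangement. All $\binom{6}{2}$ pairwise intersection points of $L$ --- the vertices of $\ch{L}$ --- lie in the closed halfplane $H^-$, while $\ch{D}$, hence all six points $p_i$, lie in the closed halfplane $H^+$. The key device is now the following: since any two lines of $L$ meet only in $H^-$, their restrictions to the open halfplane $\operatorname{int}(H^+)$ are \emph{pairwise disjoint rays}. Six pairwise non-crossing rays emanating into a halfplane maintain a constant left-to-right order at every height, equal to the order of their crossing points along $\ell^\ast$; this endows the six labels with a fixed linear order that is an invariant of the pair $(L,\ell^\ast)$, and every vertex $p_i$ is confined to the $i$-th ray. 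In particular the entire drawing lives inside $\operatorname{int}(H^+)$ on these six ordered, disjoint rays.

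The main obstacle, and the heart of the argument, is to turn this ordered-rays configuration into a contradiction using the labelling $\pi$. The plan is to choose $\pi$ so that, for \emph{every} choice of outer facial triangle $\{a,b,c\}$ (there are, up to the symmetry of the octahedron, only a few cases), no assignment of the six labels to six disjoint rays in the fixed order can host a crossing-free octahedral drawing with $\{a,b,c\}$ outermost and $\{a',b',c'\}$ nested inside. Concretely I would track, around a single outer vertex $p_a$, the two rungs $p_a p_{b'}$ and $p_a p_{c'}$ to its nested neighbours: because the inner triangle must lie inside the outer one while the carrying rays are disjoint and order-constrained, one can force two of the rungs (or a rung and a triangle edge) to leave $p_a$ on the ``wrong'' sides relative to the order along $\ell^\ast$, producing a crossing and contradicting planarity. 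Running this bookkeeping through the inequivalent outer faces shows that the separated configuration is impossible, so $\ch{L}\cap\operatorname{int}(\ch{D})\neq\emptyset$ always holds, which by the reduction yields a point in an interior face of $D$ that also lies in $\ch{L}$, proving \lemref{core}. I expect the orientation bookkeeping of the rungs relative to $\ell^\ast$ --- not the reduction --- to be the only genuinely delicate step.
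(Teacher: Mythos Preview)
The paper does not prove \lemref{core}: it is quoted from \cite{mi-GD10} and used as a black box in the proof of \thmref{gd-main}. So there is no ``paper's own proof'' to compare against, and your proposal must stand on its own.

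Your reduction is sound, but the structural hypothesis you build everything on is wrong: $G_6$ is \emph{not} the octahedron. You rely on the fact that ``the three vertices off any facial triangle themselves span the antipodal facial triangle,'' and this property characterizes the octahedron among $6$-vertex triangulations. But the octahedron cannot satisfy \lemref{core} for \emph{any} labelling. Because it is vertex-transitive, either every labelling $\pi$ works or none does; and none does, since the octahedron admits a crossing-free straight-line drawing with one vertex on each of six parallel (hence, after a tiny perturbation, nearly-parallel and pairwise non-parallel) vertical lines. Concretely, with non-edges $\{14,25,36\}$, place
\[
v_1=(1,0),\; v_2=(2,1),\; v_3=(3,4),\; v_4=(4,3),\; v_5=(5,10),\; v_6=(6,0);
\]
a direct check shows all twelve edges are pairwise non-crossing, with outer face $156$. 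Tilting the six vertical lines so that they meet far below the drawing makes $\ch{L}$ disjoint from $\ch{D}$, contradicting the lemma's conclusion for this $\pi$ --- and hence, by vertex-transitivity, for every $\pi$.

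This also exposes a quantifier issue in your plan. In the separated configuration the left-to-right order of the six rays along $\ell^\ast$ is controlled by the adversary through the choice of $L$; as $L$ varies, every permutation of the labels can occur. Composing with a fixed $\pi$, the induced vertex-to-ray assignment ranges over all $6!$ possibilities. So ``choose $\pi$ so that no ordered-ray drawing exists'' is really the $\pi$-independent claim ``for every vertex-to-ray assignment, no ordered-ray drawing exists,'' which the example above refutes for the octahedron. The actual $G_6$ in \cite{mi-GD10} is the other $6$-vertex triangulation (degree sequence $(5,5,4,4,3,3)$), which is not vertex-transitive; there the choice of $\pi$ genuinely matters, and your antipodal-triangle mechanism is unavailable. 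The ``orientation bookkeeping'' you defer is thus not merely delicate --- under your current hypotheses it cannot be completed.
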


\begin{proof}[Proof of \thmref{gd-main}] 
Let $L$ be any set of $n\geq c'=5N$ lines, where $N$ is obtained from \thmref{arrang} with values $k=6$, $c=6$, $r=17$.

\cite{1531118} proved that for every $n\geq 6$, no set of $n$ parallel lines supports all $n$-vertex planar graphs. Thus if $L$ has at least $6$ lines that are pairwise parallel, then $L$ cannot support all $n$-vertex planar graphs.

\cite{mi-GD10} proved that for every $n\geq 18$, no set of $n$ lines that all go through a common point supports all $n$-vertex planar graphs. Thus if $L$ has at least $18$ such lines, then $L$ cannot support all $n$-vertex planar graphs.

Thus assume that $L$ has no $6$ pairwise parallel lines and no $18$ lines that intersect in one common point. Then $L$ has a subset $L'$ of $c'/5\geq N$ lines no two of which are parallel and no 18 of which go through one common point. Then \thmref{arrang} implies that we can find in $L'$ six sets $A_1, \dots, A_6$  of six lines each, such that the set $\{\ch{A_1}, \dots, \ch{A_6}\}$ is in convex position. Assume $\ch{A_1}, \dots, \ch{A_6}$  appear in
that order around their common ``convex hull''.

Consider an $n$-vertex graph $H$ whose subgraph $G$ is illustrated in \figref{G}. $G\setminus v$ has three
components, $A$, $B$, and $C$, each of which is a triangulation. Each of
the components $A$, $B$, and $C$ has two vertex disjoint copies of $G_6$
(the $6$-vertex triangulation from \lemref{core}). Map the vertices
of the first copy of $G_6$ in $A$ to $A_1$ and the second copy to
$A_4$ using the mapping equivalent to $\pi$ in \lemref{core}. 
Map the vertices of the first copy of $G_6$ in $B$ to $A_2$  and the
second copy to $A_5$ using the mapping  equivalent to $\pi$ in \lemref{core}. 
Map the vertices of the first copy of $G_6$ in $C$ to $A_3$  and the
second copy to $A_6$ using the mapping equivalent to $\pi$ in \lemref{core}. 
Map the remaining vertices of $H$ arbitrarily to the remaining lines
of $L$.

\begin{figure}[htb]
 \centerline{\begin{tabular}{c}
  \includegraphics[scale=0.45]{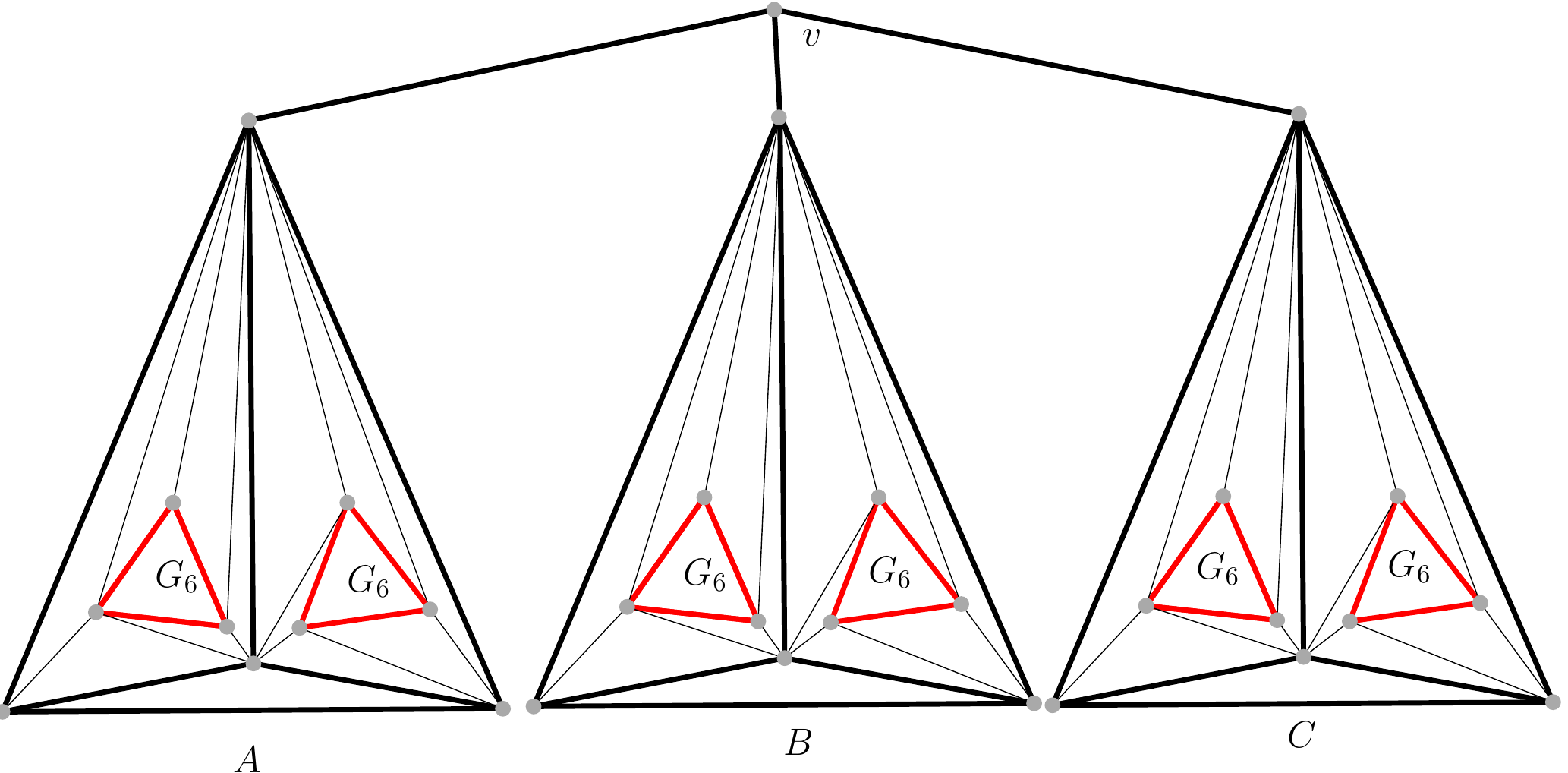}\\
\includegraphics[scale=0.45]{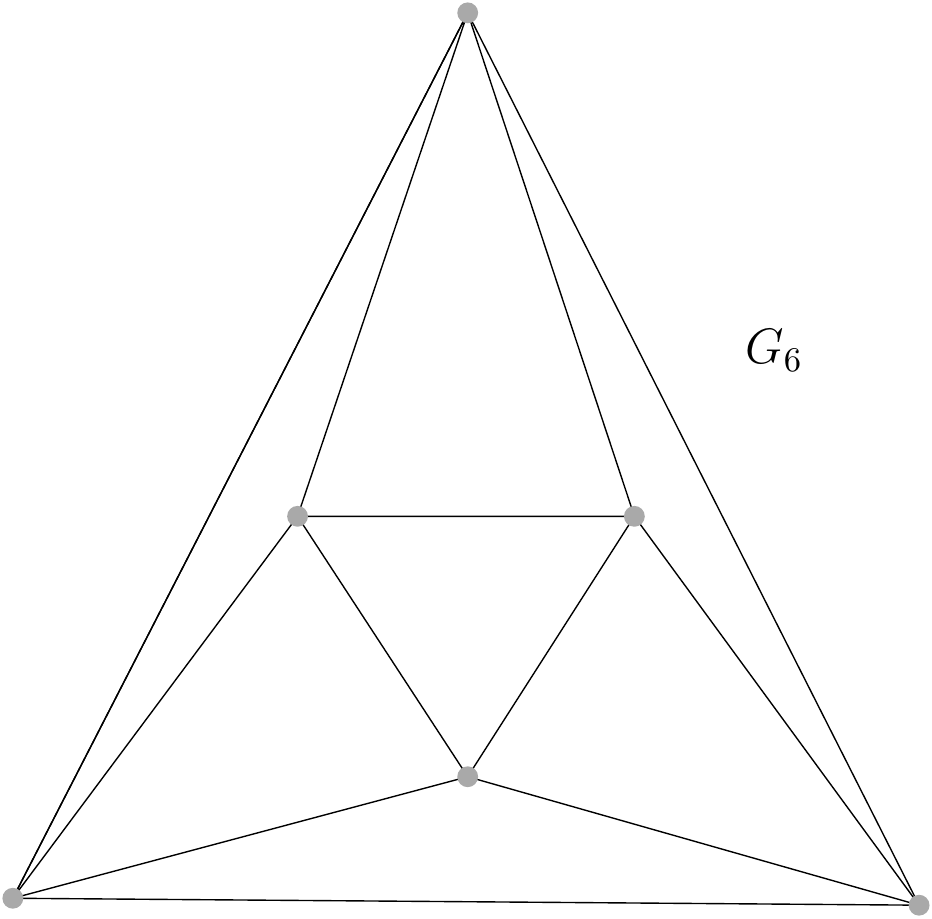}
\end{tabular}}
  \caption{\figlabel{G} Illustration for the proof of \thmref{gd-main}.}
\end{figure}

We now prove that $L$ does not support $H$ with such a
mapping. Assume, for the sake of contradiction, that 
%$L$ supports $H$ with such a mapping  
 it does and consider the resulting crossing-free drawing $D$ of $H$.
In $D$ the drawing of each of $A$, $B$, and $C$ has a triangle as an
outerface. Let $T_A$, $T_B$, and $T_C$ denote these three triangles
together with their interiors in the plane.

It is simple to verify that in any crossing-free drawing of $G$ at
least two of these triangles are disjoint, meaning that there is no point
$p$ in the plane such that $p$ is in both of these triangles. 
Assume, without loss of generality, that $T_A$ and $T_B$ are disjoint. 
By \lemref{core}, there is a point $p_1\in \ch{A_1}$ such that
$p_1\in T_A$, and a point $p_4\in \ch{A_4}$ such that $p_4\in
T_A$. Thus the segment $\overline{p_1p_4}$ is in $T_A$. Similarly,  by \lemref{core}, there is a point $p_2\in \ch{A_2}$ such that
$p_2\in T_B$, and a point $p_5\in \ch{A_5}$ such that $p_5\in T_B$. Thus the segment $\overline{p_2p_5}$ is in $T_A$.

By \thmref{arrang} and our ordering of $A_1, \dots, A_6$, $\overline{p_1p_4}$ and $\overline{p_2p_5}$
intersect in some point $p$. That implies that $p\in T_A$ and $p\in
T_B$.  That provides the desired contradiction, since $T_A$ and $T_B$
are disjoint. 
\end{proof}

As a sharp contrast to \thmref{gd-main}, the following theorem shows that the situation is starkly different for unlabelled planar graphs. Namely, every set of $n$ lines supports all $n$-vertex unlabelled planar graphs. The proof of this theorem does not use any of the tools we introduced in the previous section and is in that sense elementary. It is not difficult to verify that the theorem also follows from the main result in \cite{pt04} which states the following: given a drawing of a graph $G$ in the plane where edges of $G$ are $x$-monotone curves any pair of which cross even number of times, $G$ can be redrawn as a straight-line crossing-free drawing where the $x$-coordinates of the vertices remain unchanged.

\begin{theorem}\thmlabel{gd-unlabelled}
\cite{pt04}
Given a set $L$ of $n$ lines in the plane, every planar graph has a straight line crossing free drawing where each vertex of $G$ is placed on a distinct line of $L$.
\textup{(}In other words, given any set $L$ of lines, labelled from $1$ to $n$, and any $n$-vertex planar graph $G$ there is a vertex labelling $\pi$ of $G$ such that $L$ supports $G$ with vertex labelling $\pi$.\textup{)}
\end{theorem}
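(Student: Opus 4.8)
The plan is to prove \thmref{gd-unlabelled} directly by induction on $n$, using a standard decomposition of planar graphs via a ``free'' vertex together with a sweepline argument that respects the $x$-order of the lines. Since the statement is about \emph{unlabelled} graphs, I am free to choose which line each vertex lands on; the natural invariant to maintain is that the $i$-th vertex in some canonical order of $G$ is placed on the $i$-th line of $L$ in left-to-right order of... actually the lines need not have a left-to-right order, so instead the invariant will be that after projecting everything onto a generic direction, the vertices appear in a prescribed order that matches a canonical ordering of the lines.

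\textbf{Setup and reduction.} First I would fix a generic direction (say the $x$-axis after a rotation) so that no line of $L$ is vertical, no two lines of $L$ are parallel, and all the pairwise intersection points of $L$ have distinct $x$-coordinates; genericity of the direction is available since $L$ is finite. Relabel the lines $\ell_1,\dots,\ell_n$ by the $x$-coordinate of, say, their intersection with a far-away vertical line (equivalently, by slope after translating). Next I would recall the \emph{canonical ordering} / shelling of a planar triangulation (Schnyder / de Fraysseix--Pach--Pollack): every planar graph can be augmented to a triangulation, and a triangulation on vertex set $\{u_1,\dots,u_n\}$ admits an ordering where $u_1,u_2$ are two outer vertices, $u_n$ is the third outer vertex, and each $u_k$ ($k\geq 3$) lies on the outer face of the graph induced by $u_1,\dots,u_k$ with its earlier neighbours forming a contiguous path on that outer boundary. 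It suffices to draw the triangulation; deleting the added edges leaves a crossing-free drawing of $G$.

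\textbf{The incremental placement.} I would then build the drawing one vertex at a time, placing $u_k$ on line $\ell_k$ (so the final assignment is the bijection $u_k\mapsto \ell_k$, which is precisely the sought labelling $\pi^{-1}$). The key geometric lemma to establish is: given a crossing-free straight-line drawing of $u_1,\dots,u_{k-1}$ in which the vertices appear in $x$-order $u_1,\dots,u_{k-1}$ and the outer boundary is $x$-monotone on each of its two sides, one can place $u_k$ at a point of $\ell_k$ with $x$-coordinate larger than all previous vertices, high enough (or low enough) that the segments from $u_k$ to its contiguous block of earlier neighbours cross nothing already drawn and the new outer boundary is again $x$-monotone. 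This is exactly the de Fraysseix--Pach--Pollack ``rubber-band''/shift idea, but here the $x$-coordinate of $u_k$ is \emph{forced} to be whatever $x$-value we pick on $\ell_k$ rather than an integer we are free to choose — so I must check that moving $u_k$ far enough to the right and then far enough vertically along $\ell_k$ always works. The point is that ``far to the right and far away vertically'' makes every new edge arbitrarily close to vertical, hence it only interacts with the rightmost part of the current outer boundary, which is visible; and because the neighbours of $u_k$ among $u_1,\dots,u_{k-1}$ are a contiguous arc of the outer boundary (by the canonical-ordering property), all of them are simultaneously visible from such a point.

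\textbf{Main obstacle.} The hard part will be the visibility/crossing-freeness bookkeeping in the incremental step when the $x$-coordinates of vertices are dictated by the lines rather than chosen freely: in the classical proof one has the luxury of placing $u_k$ at a grid point of one's choosing and shifting earlier vertices apart, whereas here earlier vertices are pinned to their lines and cannot be shifted. I expect the resolution to be that we only ever need to push $u_k$ monotonically to the right along $\ell_k$ and let $|y|\to\infty$, and that because all previously placed points lie in a bounded region while $\ell_k$ escapes to infinity, the cone of directions from $u_k$ back to that bounded region shrinks to a single direction, so that the contiguous block of earlier neighbours eventually lies in a half-plane in convex position as seen from $u_k$ — making all the new segments avoid the old drawing. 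Once this lemma is in hand the induction is routine, and deleting the augmenting edges at the end yields the claimed drawing of $G$; this gives \thmref{gd-unlabelled} with an entirely elementary argument, independent of \cite{pt04} and of the center-transversal machinery of the previous sections.
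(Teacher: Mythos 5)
Your skeleton (triangulate, take a canonical ordering, add vertices one at a time onto prescribed lines, exploit the contiguity of the new vertex's neighbours on the current outer boundary) matches the paper's proof, but two of your specific choices break the induction, and they are exactly the two points where the paper does something different. The first is the vertex-to-line assignment. You put the $k$-th vertex $u_k$ of the canonical ordering on the $k$-th line in left-to-right order, so the $x$-order of the vertices is the canonical order and each new vertex is rightmost so far. The canonical order is not compatible with the left-to-right structure of any planar drawing of $G$: already $v_2$ is the second vertex of the canonical ordering, yet it is the unique sink of the frame \fg, so it must be drawn as the \emph{rightmost} vertex (the paper pins it at $(b_n,0)$), not the second leftmost. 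More generally, the contiguous arc of $C_{k-1}$ to which $u_k$ attaches can sit anywhere along the outer path from $v_1$ to $v_2$, including at its far left; a point on $\ell_k$ that you have forced to lie to the right of the entire current drawing cannot in general see such an arc, because the portion of $C_{k-1}$ to the right of the arc occludes it. Pushing $u_k$ to infinity along $\ell_k$ does not help: the directions from $u_k$ back to the bounded drawing collapse to the single direction of $\ell_k$, so visibility degenerates to visibility from infinity in one direction, which reaches only the right-facing part of the outer boundary; ``lying in a half-plane'' gives neither convex position nor visibility. The paper's fix is to order the vertices not by the canonical ordering but by a linear extension of the partial order \po\ of the frame (the DAG of first/last predecessor edges), which is exactly what makes every outer path $C_i$ realizable as an $x$-monotone chain when the $x$-coordinates are dictated by the lines.

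The second gap is the quantifier order in the induction. You freeze the already-drawn vertices and move only $u_k$ along $\ell_k$. With the old drawing frozen, its outer boundary is a fixed polygon, and a contiguous sub-arc of a fixed polygon containing a reflex vertex is not entirely visible from any single exterior point, let alone from a point constrained to $\ell_k$. The paper's induction hypothesis is instead: for \emph{every} $\hat y>0$, the subgraph $G_{i-1}$ can be drawn with $v_1,v_2$ at $(b_1,0),(b_n,0)$ and all other vertices in the slab $(0,\hat y]$. One first chooses where $v_i$ goes (high on its line), and only then redraws $G_{i-1}$ in a slab thin enough that every edge has slope at most some small $s$, so that the outer path is $x$-monotone and $v_i$ lies above the supporting line of each of its edges. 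This re-drawing of the earlier vertices -- legitimate because they only need to stay on their lines, not at fixed points -- is precisely the step your proposal forgoes, and without it the visibility claim in your ``main obstacle'' paragraph is false. Both gaps are repairable, but repairing them essentially reproduces the paper's argument (frame, linear extension, slab-shrinking induction).
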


\begin{proof}
In this proof we will use canonical orderings introduced in
\cite{dFPP90} and a related structure called frame introduced in \cite{untangling}. We first recall these tools. 
We can assume $G$ is an embedded edge maximal planar graph.\footnote{A planar graph $H$ is edge-maximal (also called, a \emph{triangulation}), if for all $vw\not\in E(H)$, the graph resulting from adding $vw$ to $H$ is not planar.} 
 Each face of \eg\ is bounded by a $3$-cycle. 
De Fraysseix \cite{dFPP90} proved that \eg\ has a vertex ordering $\sigma=(v_1,v_2, v_3, \dots,
v_n)$, called a \emph{canonical ordering}, with the following
properties. Define $G_i$ to be the embedded subgraph of \eg\ induced by
$\{v_1, v_2,\dots,v_i\}$.  Let $C_i$ be the subgraph of \eg\ induced
by the edges on the boundary of the outer face of $G_i$. Then
\begin{fortuna}
\item \xx, \yy\ and \zz\ are the vertices on the outer face of \eg. 
\item  For each $i\in\{3,4,\dots,n\}$, $C_i$ is a cycle containing $\xx\yy$.
\item  For each $i\in\{3,4,\dots,n\}$, $G_i$ is biconnected and \emph{internally $3$-connected}; that is, removing any two interior vertices of $G_i$ does not disconnect it.
\item For each $i\in\{3,4,\dots,n\}$, $v_i$ is a vertex of $C_i$ with at least two neighbours in $C_{i-1}$, and these neighbours are consecutive on $C_{i-1}$.
\end{fortuna}
For example, the ordering in \figref{canonical}(a) is a canonical ordering of the depicted embedded graph \eg.
\Figure{canonical}{
\includegraphics[width=3in]{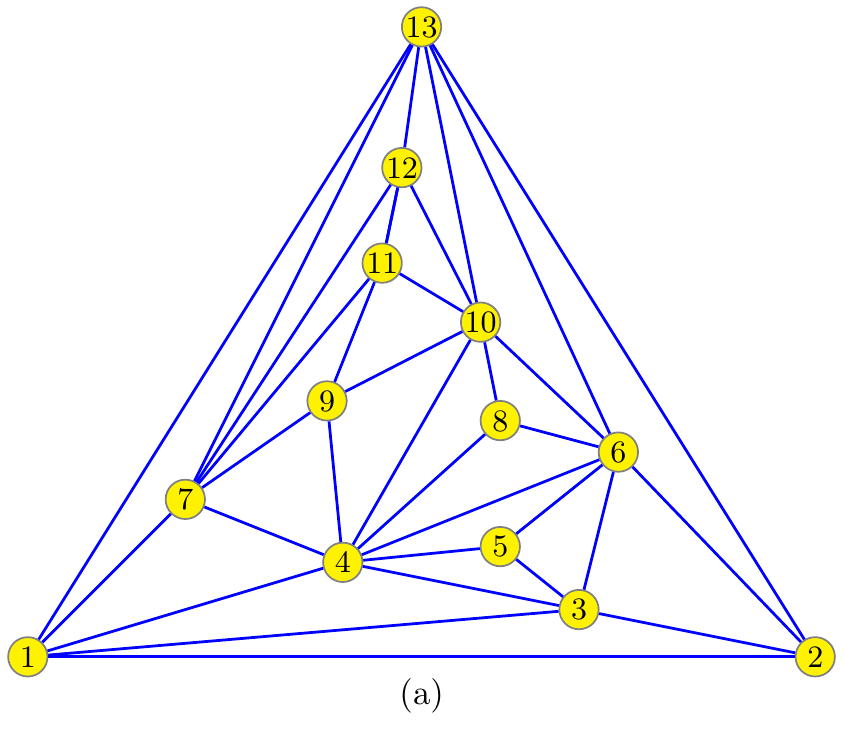}
\includegraphics[width=3in]{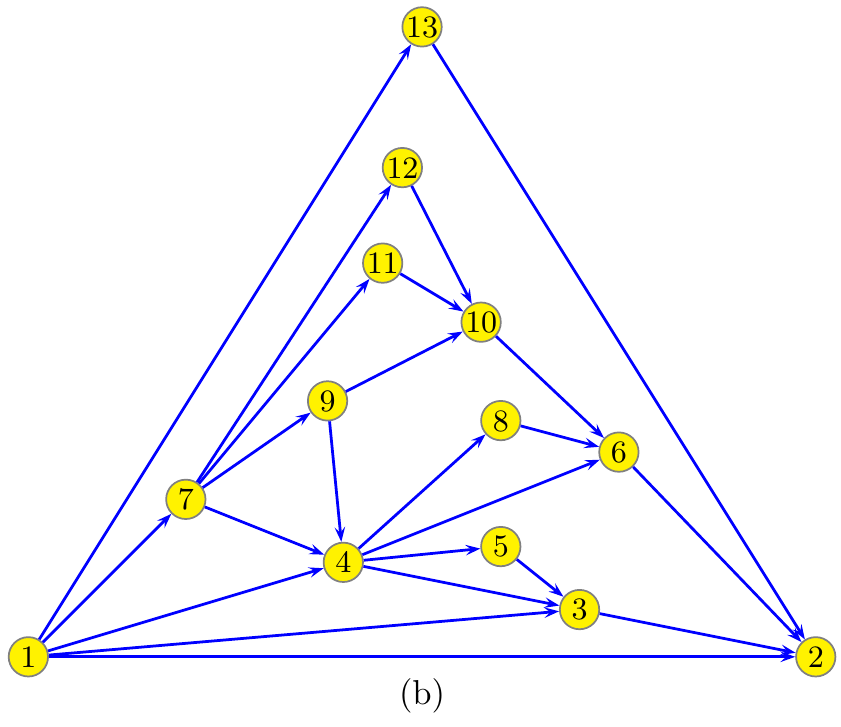}
}{Illustration for the proof of \thmref{gd-unlabelled}: (a) Canonical ordering of \eg, (b) Frame \fg\ of \eg}

A \emph{frame} \fg\ of \eg\ \cite{untangling} is the oriented subgraph of \eg\ with vertex set  $V(\fg):=V(\eg)$, where: 
\begin{fortuna}
\item \xx\yy\ is in $E(\fg)$ and is oriented from \xx\ to \yy.
\item For each $i\in\{3,4,\dots,n\}$ in the canonical ordering
  $\sigma$ of \eg, edges $pv_i$ and $v_ip'$ are in $E(\fg)$, where $p$
  and $p'$ are the first and the last neighbour, respectively, of
  $v_i$ along the path in $C_{i-1}$ from \xx\ to \yy\ not containing
  edge \xx\yy. Edge $pv_i$ is oriented from $p$ to $v_i$, and edge
  $v_ip'$ is oriented from $v_i$ to $p'$, as illustrated in
  \figref{canonical}(b). 
%We call $p$ the \emph{left predecessor} of $v$ and $p'$ the \emph{right predecessor} of $v$.
\end{fortuna}

%We also say that \fg\ is a frame of $G$. 
By definition, \fg\ is a directed acyclic graph with one source \xx\,
and one sink \yy. The frame \fg\ defines a partial order \po\ on $V(\fg)$, where $v\po w$ whenever there is a directed path from $v$ to $w$ in \fg. 

Translate the given set $L$ of lines so that all vertices of the
arrangement of lines have negative $y$ coordinates, and
%so that all lines in $L$ intersect the $x$ axis at positive $x$
%coordinates and
sort the lines $\ell_i\in L$ according to the $x$ coordinate $b_i$ of
the intersection of $\ell_i$ with the $x$ axis. 
Therefore, the lines $\ell_i\in L$ have equation $y=a_i(x-b_i)$,
with $b_1< b_2< \ldots < b_n$. Because all intersections
among lines of $L$ have negative coordinates, all $b_i$ are
distinct, and the values $1/a_i$ are sorted. 
Note that the slopes $a_i$ might be positive or negative.
Let $\hat{a} = \min|a_i|$. For any segment of slope in
$[-\hat{a},\hat{a}]$ connecting two points $(x_i,y_i)\in\ell_i$ and
$(x_j,y_j)\in\ell_j$ above the $x$ axis (that is, $y_i, y_j > 0$), 
$x_i<x_j$ if and only if $i<j$.

Construct a linear extension $v_{\rho(1)},v_{\rho(2)},\ldots,v_{\rho(n)}$ of the partial order
\po\ and define the bijection $\pi: V\rightarrow [n]$ as
$\pi(v_{\rho(i)})=i$. That is, the vertices of $G$ will be placed on the lines in
such a way that the partial order \po\  is
compatible with the order determined by the values $b_i$ of the lines.

We prove by induction that for every value $\hat{y}$ and every 
$i\geq 2$, it is possible to draw $G_i$ such that $v_1$ and $v_2$ are
placed on points $(b_1,0)$, $(b_n,0)$, and the $y$ coordinates of all
other vertices are in the horizontal slab $(0,\hat{y}]$. The base case ($i=2$) is
obviously true.

Note that we could have formulated the induction on the slopes of the
edges of $G_i$ in the drawing. In fact those two formulations imply
each other: for any value $0<s\leq\hat{a}$, there is a $\hat{y}_s>0$ such that any segment whose endpoints lie on distinct lines of $L$ and have $y$ coordinates
in $[0,\hat{y}_s]$, the slope of the segment is in $[-s,s]$. 
%\vfill\eject 
 This is
easy to see: draw an upward cone with apex on each point $(b_i,0)$ and bounded
by the lines of slopes $s$ and $-s$ through that point. Define
$\hat{y}_s$ as the $y$ coordinate of the lowest intersection point
between any two such cones. Any segment with a slope not in $[-s,s]$
and with its lowest point inside a cone must have its highest point
inside the same cone, therefore no segment connecting two different
lines inside the horizontal slab $[0,\hat{y}_s]$ can have such a slope.

Assume by induction that the statement is true for $G_{i-1}$. We will
show how to draw $G_i$ for a specific value $\hat{y}$.
The point $v_i$ will be placed on the point on line $\pi(v_i)$ with $y$
coordinate $\hat{y}$. 
%Let $p$ and $p'$ be the first and last neighbor
%along the path in $C_{i-1}$ not containing $v_1v_2$. 
Let $s_1$ and $s_2$ be the slopes of the segments $v_1v_i$ and $v_iv_2$, and
let $s=\max(|s_1|,|s_2|)/2$ or $\hat{a}$, whichever is smaller. 
Let $y_1$ be the intersection of the line
of slope $s$ through $v_i$ and line $\ell_1$ and $y_2$ the
intersection of the line of slope $-s$ through $v_i$ and
$\ell_n$. Note that $y_1$ and $y_2$ are strictly positive.
Let $\hat{y}' = \min(y_1,y_2,\hat{y}_{s})$. Apply the induction
hypothesis to draw $G_{i-1}$ in the horizontal slab $[0,\hat{y}']$.
Thus, in the drawing of $G_{i-1}$, all edges have slope at most
$s\leq\hat{a}$. Then by construction, the path in $C_{i-1}$ from \xx\
to \yy\ not containing edge \xx\yy\ is $x$-monotone (that is, all its
edges are oriented rightwards), and $v_i$ is above the supporting line of
each edge on that path. Therefore, $v_i$ can see all vertices in
$C_{i-1}$ and all edges adjacent to $v_i$ can be drawn.
\end{proof}
We conclude this part with an intriguing 3D variant of this graph
drawing problem. A graph is \emph{linkless} if it has an embedding in
3D such that any two cycles of the graph are unlinked\footnote{Two disjoint
curves in 3D are \emph{unlinked} if there is a continuous motion of
the curves which transforms them into disjoint coplanar circles
without one curve passing through the other or through itself.}.
These graphs form a three-dimensional analogue of the planar graphs.
\begin{open}
 Is there an arrangement of labelled planes in 3D such that any
 labelled linkless graph has a linkless straight-line embedding where
 each vertex is placed on the plane with the same label?
\end{open}

\section*{Acknowledgements}
We wish to thank J{\'a}nos Pach for bringing our attention to the notion of mutually avoiding sets \cite{mut, val} and for pointing out that \thmref{gd-unlabelled} was an easy consequence of \cite{pt04}. We also wish to thank Pat Morin for helpful discussions and the anonymous referees for their comments.

%\newpage

%\bibliographystyle{alpha}
%\bibliography{subarr,gd-arrangements}
\newcommand{\etalchar}[1]{$^{#1}$}

\end{document}